\documentclass{article}
\usepackage{graphicx} 
\usepackage{amsmath}
\usepackage{abstract}
\usepackage{amsthm}
\usepackage{mathtools}
\usepackage{amssymb}
\usepackage{breqn}
\usepackage{enumitem}
\usepackage{subcaption}
\usepackage{float}
\usepackage{caption}    
\usepackage[dvipsnames]{xcolor}
\usepackage{geometry}
\usepackage{appendix}
\usepackage{titlesec}
\usepackage{epsfig,color}
\usepackage{hyperref}
\newtheorem{theorem}{Theorem}[section]
\newtheorem{proposition}{Proposition}[section]
\newtheorem{lemma}{Lemma}[section]
\newtheorem{remark}{Remark}[section]
\numberwithin{figure}{section}
\numberwithin{table}{section}
\numberwithin{equation}{section}

\theoremstyle{definition}
\newtheorem{example}[theorem]{Example}

\title{Ground states and droplet regimes of the extended Gross–Pitaevskii equation with Lee–Huang–Yang correction}
\author{
Weijie Huang \thanks{School of Mathematics and Statistics, Beijing Jiaotong University, Beijing 100044, People's Republic of China {\tt (wjhuang@bjtu.edu.cn)}}, \
Yang Liu \thanks{School of Mathematics and Statistics, Beijing Jiaotong University, Beijing 100044, People's Republic of China {\tt (yangliu02@bjtu.edu.cn)}}, \ and
Xinran Ruan \thanks{(Corresponding author) School of Mathematical Sciences, Capital Normal University, Beijing 100048, People’s Republic of China {\tt (xinran.ruan@cnu.edu.cn)}.}
}
\date{}

\begin{document}
\maketitle
\begin{abstract}
We study the ground states of the extended Gross--Pitaevskii equation with the Lee--Huang--Yang correction from both theoretical and numerical perspectives. Starting from the three-dimensional model, we derive reduced one- and two-dimensional equations through nondimensionalization and dimensional reduction. We establish existence and nonexistence results for ground states in different spatial dimensions, both in free space and under confining external potentials. For the numerical computation of ground states, we propose a normalized gradient flow method with a Lagrange multiplier. 
The numerical results show how the model parameters affect the ground-state profiles, and reveal different regimes in the free-space parameter plane, including no-ground-state, soliton-like, and droplet-like regions. We also introduce a simple flat-top approximation for the droplet regime and present two- and three-dimensional computations to illustrate more general localized structures.
\end{abstract}

{\bf Keywords.} Gross-Pitaevskii equation, Lee-Huang-Yang correction, ground state, normalized gradient flow, quantum droplet

\bigskip
{\bf MSC 2020:} 35Q55, 35P30, 65M06, 65M60
\section{Introduction}
    
Quantum droplets in ultracold Bose gases have attracted much attention
and are of great interest in the study of many-body effects
and superfluid dynamics.
Unlike classical droplets stabilized by surface tension, Bose quantum droplets
originate from the balance between attractive mean-field interactions
and repulsive quantum fluctuations beyond the mean-field description \cite{2015Quantum}.
A classical description of this effect was given by T. D. Lee, K. Huang, and C. N. Yang in 1957 through the pseudopotential approach, leading to the well-known Lee--Huang--Yang correction \cite{1957Eigenvalues}.
The Gross--Pitaevskii equation is a standard mean-field model for dilute Bose gases \cite{1961Structure,1961Vortex}.
Incorporating the Lee--Huang--Yang correction into the Gross--Pitaevskii framework gives rise to the extended Gross--Pitaevskii model, which has become a basic model for describing quantum droplets and related phenomena.
In this context, Petrov predicted in 2015 the existence of self-bound quantum droplets in Bose--Bose mixtures \cite{2015Quantum}, and such states were subsequently observed in several experiments \cite{cabrera2018quantum,ferrier2016observation,schmitt2016self,semeghini2018self}.
These developments make the extended Gross--Pitaevskii model a natural framework for investigating the static and dynamical properties of quantum droplets.

Accordingly, we consider the following extended Gross--Pitaevskii (eGP) equation with the Lee--Huang--Yang (LHY) correction
\begin{equation}
    i\hbar\partial_t \psi(\mathbf{x}, t)
    = \left[ -\frac{\hbar^2}{2m}\nabla^2 + V(\mathbf{x})
    + g|\psi|^2 + g_{\text{LHY}}|\psi|^{3}\right] \psi,
    \label{eGPE_dim}
\end{equation}
where \(\psi\) denotes the macroscopic wave function,
\(\hbar\) is the reduced Planck constant,
\(m>0\) is the atomic mass,
\(V(\mathbf{x})\) is the external potential,
\(g = \frac{4\pi\hbar^2 a_s}{m}\) is the interaction coefficient,
\(g_{\text{LHY}} = C_{\text{L}} g a_s^{3/2}\) is the LHY coefficient,
\(a_s\) is the scattering length,
and \(C_{\text{L}}\) is a positive constant.
The wave function satisfies the normalization condition
\begin{equation}
	\|\psi\|_2^2 := 
    \int_{\mathbb{R}^3} |\psi|^2 \, d\mathbf{x} = N,
    \label{normalization}
\end{equation}
which corresponds to the conservation of  the total particle number \(N\).
The energy functional associated with \eqref{eGPE_dim} is given by
\begin{equation}
    E(\psi) = \int_{\mathbb{R}^3}
    \left[
    \frac{\hbar^2}{2m}|\nabla \psi|^2
    + V(\mathbf{x})|\psi|^2
    + \frac{1}{2}g|\psi|^4
    + \frac{2}{5}g_{\text{LHY}}|\psi|^5
    \right] d\mathbf{x}.
    \label{energy_dim}
\end{equation}
The ground state wave function \(\phi_g\) is defined as a minimizer of the energy functional \eqref{energy_dim} under the normalization constraint \eqref{normalization}, namely,
\[
\phi_g \in \arg\min_{\|\phi\|_2^2 = N} E(\phi).
\]


The existence and qualitative properties of ground states
for eGP-type models are fundamental issues in mathematical analysis,
and they also provide the theoretical basis for the design and
justification of numerical methods.
In recent years, increasing attention has been paid to these questions.
Y. Luo and A. Stylianou studied standing waves for a three-dimensional dipolar model with quantum fluctuations and three-body interactions \cite{2021On}.
They also obtained related results on ground states for a nonlocal cubic--quartic Gross--Pitaevskii equation in \cite{2018Ground}.
In a related direction, B. Feng, L. Cao, and J. Liu proved the existence of stable standing waves for the LHY-corrected dipolar Gross--Pitaevskii equation with partial harmonic confinement \cite{feng2021existence}.


At present, the numerical approximation of the eGP model with the LHY correction
has received only limited attention.
For the computation of ground states in classical Bose--Einstein condensates,
a variety of effective numerical methods have been developed.
Among them, the gradient flow with discrete normalization proposed by W. Bao and Q. Du
is one of the most widely used approaches \cite{bao2004computing}, and a
systematic review can be found in \cite{2013Mathematical}.
This method has been extended in several directions, including
Sobolev gradient methods \cite{danaila2010new}, nonlinear conjugate gradient
methods \cite{antoine2017efficient}, and
energy-stable gradient flow methods based on Lagrange multipliers \cite{liu2021normalized} or SAV-type
reformulations \cite{zhuang2019efficient}.
For multi-component and spinor Bose--Einstein condensates, related extensions
have also been developed \cite{bao2008computing,2005Gauss}.
Besides gradient-flow-based approaches, direct energy minimization methods have
also been proposed, including finite element methods for computing dark and
bright solitons \cite{bao2013numerical}, Riemannian optimization \cite{danaila2017computation} and regularized
Newton-type methods \cite{wu2017regularized}.

These methods provide useful starting points for the numerical treatment of the
eGP model.
However, the presence of the LHY correction introduces additional difficulties.
On one hand, the higher-order nonlinear term increases the nonlinearity of the model and introduces additional difficulties for numerical stability.
On the other hand, in the absence of an external potential,
that is when \(V(\mathbf{x}) \equiv 0\), self-bound quantum droplets are typically strongly
localized, which may lead to insufficient resolution, spurious oscillations,
or reduced accuracy in computations.
Therefore, classical methods  need to be further adapted
and improved in order to achieve a better balance among stability, accuracy,
and computational efficiency for the eGP model with the LHY correction.

 In this paper, we study the ground states of the eGP equation with the LHY correction
from both theoretical and numerical perspectives.
We first establish the existence and nonexistence results
for ground states in different spatial dimensions and parameter regimes.
For the numerical computation of ground states, we develop a GFDN-based method with a new discretization strategy, and numerical experiments show that the proposed scheme allows for a wider range of discretization parameters while maintaining
good stability and accuracy.
Our numerical results illustrate how the model parameters affect the ground-state profiles and identify several typical regimes of behavior.
We also present numerical results for more general settings, including cases
with external potentials and higher-dimensional configurations. 
In the present work, we restrict ourselves to the single-component model,
while dipolar and two-component cases will be left for future study.

The rest of this paper is organized as follows.
Section \ref{sec:nondim} is devoted to the nondimensionalization of the model and the derivation of a unified form of the eGP models in different spatial dimensions.
In Section \ref{sec:theory}, we investigate the existence and nonexistence of ground states with and without external potentials.
Section \ref{sec:scheme} presents a numerical scheme for ground-state computation.
In Section \ref{sec:experiment}, numerical results are provided for cases with and without external potentials.
Finally, some concluding remarks are presented in Section \ref{sec:conclusion}.

\section{Dimension Reduction} \label{sec:nondim}

We start from the three-dimensional eGP equation \eqref{eGPE_dim}.
To facilitate the subsequent derivation of reduced lower-dimensional models, we first rewrite the equation
in dimensionless form under the normalization constraint \eqref{normalization}.
For a prescribed constant \(c>0\), we introduce the scaling
\begin{equation}
    \tilde{t} = \frac{t}{t_s},\qquad
    \tilde{\mathbf{x}} = \frac{\mathbf{x}}{x_s},\qquad
    \tilde{\psi}(\tilde{\mathbf{x}},\tilde{t})
    = \frac{c x_s^{3/2}}{\sqrt{N}}\,\psi(\mathbf{x},t),
    \label{scaling}
\end{equation}
with
\(
    t_s = \frac{m x_s^2}{\hbar},
\)
so that the coefficient in front of the kinetic term becomes \(1/2\).
Then, omitting the tildes for notational simplicity and still denoting the rescaled potential by \(V\), we obtain the dimensionless three-dimensional eGP equation
\begin{equation}
    i\partial_t \psi(\mathbf{x},t)
    = -\frac{1}{2}\nabla^2\psi
    + V(\mathbf{x})\psi
    + \beta |\psi|^2\psi
    + \lambda |\psi|^3\psi,
    \label{eGPE_nodim}
\end{equation}
where the wave function satisfies the dimensionless normalization condition
\begin{equation}
    \|\psi(\cdot,t)\|_2 = c,
    \qquad \text{or equivalently,} \qquad
    \int_{\mathbb{R}^3} |\psi(\mathbf{x},t)|^2\,d\mathbf{x} = c^2.
    \label{normalization_nodim}
\end{equation}
The dimensionless interaction parameters are given by
\begin{equation}
    \beta = \frac{m g N}{\hbar^2 c^2 x_s}
          = \frac{4\pi N a_s}{c^2 x_s},
    \qquad
    \lambda = \frac{m g_{\mathrm{LHY}} N^{3/2}}{\hbar^2 c^3 x_s^{5/2}}
            = 4\pi C_{\mathrm L}\frac{N^{3/2}}{c^3}
              \left(\frac{a_s}{x_s}\right)^{5/2}.
\end{equation}
The associated dimensionless energy functional is
\begin{equation}
    E(\psi) = \int_{\mathbb{R}^3}
    \left[
    \frac{1}{2}|\nabla\psi|^2
    + V(\mathbf{x})|\psi|^2
    + \frac{\beta}{2}|\psi|^4
    + \frac{2\lambda}{5}|\psi|^5
    \right] d\mathbf{x}.
    \label{energy_nodim}
\end{equation}

As an important example, we consider the harmonic oscillator potential,
\begin{equation}
    V(\mathbf{x}) = V_{\mathrm{ho}}(x) + V_{\mathrm{ho}}(y) + V_{\mathrm{ho}}(z),
    \qquad
    V_{\mathrm{ho}}(\alpha) = \frac{1}{2}m\omega_\alpha^2\alpha^2,
    \qquad \alpha = x,y,z,
    \label{harmonic}
\end{equation}
for which the corresponding dimensionless potential takes the form
\begin{equation} \label{V_har}
    V(\mathbf{x})
    = \frac{1}{2}\bigl(\gamma_x^2 x^2 + \gamma_y^2 y^2 + \gamma_z^2 z^2\bigr),
\end{equation}
where
\begin{equation}
    \gamma_\alpha = \frac{m x_s^2}{\hbar}\omega_\alpha,
    \qquad \alpha = x,y,z.
\end{equation}

\begin{remark}
We note that \(\omega_\alpha\) may also vanish, corresponding to the case
without an external potential.
This differs from the conventional Bose-Einstein condensate (BEC) setting, where an external trapping
potential is typically present.
\end{remark}

   For the harmonic potential \eqref{V_har}, anisotropy in the trapping
frequencies may lead to effective lower-dimensional behavior.
More precisely, when one or two trapping frequencies are much larger than the others, the wave function becomes strongly confined in the corresponding
directions.
As a result, the dynamics in these directions are effectively frozen, and the three-dimensional model can be reduced to a lower-dimensional one. 

We next derive the reduced equations from \eqref{eGPE_nodim}--\eqref{normalization_nodim} under the harmonic potential \eqref{V_har}.


\noindent \textbf{I. Disk-shaped Case.} 
We first consider the disk-shaped case, where the trapping frequencies satisfy
\(\gamma_x \ll \gamma_z\) and \(\gamma_y \ll \gamma_z\).
In this regime, the confinement in the \(z\)-direction is much stronger than
that in the \(x\)- and \(y\)-directions, so that the effective dynamics are
essentially concentrated on the \((x,y)\)-plane.
Formally, this corresponds to the limit \(\gamma_z \to \infty\).
According to \cite{ben2005nonlinear}, under strong confinement along the
\(z\)-axis, the wave function in this direction can be well approximated by a
Gaussian ground mode.
We therefore adopt the factorization
\[
\psi(\mathbf{x},t)=\psi_{12}(x,y,t)\psi_3(z),
\]
where \(\psi_3\) is the normalized Gaussian-type profile in the \(z\)-direction, satisfying
\[
\int_{\mathbb{R}} |\psi_3(z)|^2\,dz = 1.
\]
Hence,
\[
\int_{\mathbb{R}^2} |\psi_{12}(x,y,t)|^2\,dxdy =\int_{\mathbb{R}^3} |\psi(\mathbf{x},t)|^2\,d\mathbf{x}  = c^2.
\]
Substituting this ansatz into \eqref{eGPE_nodim} and integrating over $z$, we obtain the two-dimensional eGP equation
\begin{equation}
    i\partial_t \psi_{12}
    = -\frac{1}{2}\nabla_{\perp}^2\psi_{12}
    + \frac{1}{2}\left(\gamma_x^2 x^2+\gamma_y^2 y^2 + C\right)\psi_{12}
    + \beta_2 |\psi_{12}|^2\psi_{12}
    + \lambda_2 |\psi_{12}|^3\psi_{12},
    \label{eGPE_trans}
\end{equation}
where \(\nabla_{\perp}^2 = \partial_x^2 + \partial_y^2\) denotes the transverse Laplacian,
\[
\beta_2 = \beta\int_{\mathbb{R}} |\psi_3(z)|^4\,dz,
\qquad
\lambda_2 = \lambda\int_{\mathbb{R}} |\psi_3(z)|^5\,dz,
\]
and
\[
C = \gamma_z^2 \int_{\mathbb{R}} z^2 |\psi_3(z)|^2\,dz
    + \int_{\mathbb{R}} \left|\frac{d\psi_3}{dz}\right|^2\,dz
\]
is a constant.
By the phase transformation
\[
\psi_{12} \mapsto \psi_{12} e^{-iCt/2},
\]
the constant \(C\) can be removed from the equation.

\noindent \textbf{II. Cigar-shaped Case.}
We next consider the cigar-shaped case, where the trapping frequencies satisfy
\(\gamma_x \ll \gamma_y\) and \(\gamma_x \ll \gamma_z\).
In this regime, the confinement in the \(y\)- and \(z\)-directions is much
stronger than that in the \(x\)-direction, so that the effective dynamics are
essentially concentrated along the \(x\)-axis.
Formally, this corresponds to the limit \(\gamma_y,\gamma_z \to \infty\).

Under such strong confinement, we assume that the profile in the
\((y,z)\)-plane can be approximated by a time-independent Gaussian ground mode.
We therefore adopt the factorization
\[
\psi(\mathbf{x},t)=\psi_1(x,t)\psi_{23}(y,z),
\]
where \(\psi_{23}\) is the normalized Gaussian-type profile in the \((y,z)\)-plane, satisfying
\[
\int_{\mathbb{R}^2} |\psi_{23}(y,z)|^2\,dydz = 1.
\]
Hence,
\[
 \int_{\mathbb{R}} |\psi_{1}(x)|^2\,dx  = \int_{\mathbb{R}^3} |\psi(\mathbf{x},t)|^2\,d\mathbf{x}  = c^2.
\]
Substituting this ansatz into \eqref{eGPE_nodim} and integrating over $y$ and $z$, we obtain the one-dimensional eGP equation
\begin{equation}
    i\partial_t \psi_1
    = -\frac{1}{2}\frac{\partial^2\psi_1}{\partial x^2}
    + \frac{1}{2}\left(\gamma_x^2 x^2 + C\right)\psi_1
    + \beta_1 |\psi_1|^2\psi_1
    + \lambda_1 |\psi_1|^3\psi_1,
    \label{eGPE_trans_one}
\end{equation}
where
\[
\beta_1 = \beta\int_{\mathbb{R}^2} |\psi_{23}(y,z)|^4\,dydz,
\qquad
\lambda_1 = \lambda\int_{\mathbb{R}^2} |\psi_{23}(y,z)|^5\,dydz,
\]
and
\[
C =  \int_{\mathbb{R}^2} \left( \gamma_y^2 y^2 +  \gamma_z^2 z^2 \right)
|\psi_{23}(y,z)|^2\,dydz
+ \int_{\mathbb{R}^2} \left( |\partial_y\psi_{23}(y,z)|^2
+  |\partial_z\psi_{23}(y,z)|^2 \right)\,dydz
\]
is a constant.
By the phase transformation
\[
\psi_1 \mapsto \psi_1 e^{-iCt/2},
\]
the constant \(C\) can be removed from the equation.
    
For convenience, in the following sections we write the one-, two-, and
three-dimensional eGP equations \eqref{eGPE_nodim}, \eqref{eGPE_trans},
and \eqref{eGPE_trans_one} in the unified form
\begin{equation}
    i\partial_t \psi(\mathbf{x},t)
    = -\frac{1}{2}\nabla^2\psi
    + V(\mathbf{x})\psi
    + \beta |\psi|^2\psi
    + \lambda |\psi|^3\psi, 
    \label{eGPE}
\end{equation}    
with the normalization constraint
\begin{equation}\label{mass}
\|\psi(\cdot,t)\|_2 = c,
\end{equation}
where \(\mathbf{x}\in\mathbb{R}^d\), \(\nabla^2\) denotes the Laplacian in
\(\mathbb{R}^d\), and \(d=1,2,3\).
The corresponding energy functional associated with \eqref{eGPE} is given by
\begin{equation}
\label{energy}
    E(\psi)
    =
    \int_{\mathbb R^d}
    \left(
    \frac{1}{2}|\nabla\psi|^2
    +
    V(\mathbf{x})|\psi|^2
    +
    \frac{\beta}{2}|\psi|^4
    +
    \frac{2\lambda}{5}|\psi|^5
    \right)\,d\mathbf{x}.
\end{equation}
This energy functional will be used to study the existence and nonexistence of ground states under the mass constraint \eqref{mass}.

\section{Existence and nonexistence of ground states}
\label{sec:theory}

In this section, we investigate the existence and nonexistence of ground states
for the unified eGP equation \eqref{eGPE}--\eqref{mass}. A ground state
\(\phi_g\) is defined as a minimizer of the energy functional
\eqref{energy} under the mass constraint \(\|\phi\|_2=c\), namely,
\begin{equation}
	\phi_g \in \arg\min_{\|\phi\|_2=c} E(\phi).
\end{equation}
We begin with the main results.

\subsection{Main results}
To formulate the problem precisely, for \(c>0\), define
\begin{equation} \label{def:S}
S_d(c):=\{u\in H^1(\mathbb R^d):\|u\|_2=c\},
\qquad
\gamma(c):=\inf_{u\in S_d(c)}E(u).
\end{equation}
We then consider two typical situations for the external potential \(V\).
The first is the free-space case
\begin{equation} \label{def:V0}
V(\mathbf{x}) \equiv 0,
\end{equation}
and the second is the case of a confining external potential.
The free-space case is of particular interest for the eGP model, since it describes self-trapped states and has no direct analogue in the conventional trapped BEC setting.

We now state the main existence and nonexistence results for these two cases.

\begin{theorem}[Free-space case]\label{thm:V0}
Let \(d=1,2,3\), \(\lambda>0\) and \(V(\mathbf{x})\equiv0\). Then \(\gamma(c)\) is well defined for every \(c>0\), and the following hold.

\begin{itemize}
\item[(i)] If \(\beta\ge0\), then \(\gamma(c)=0\) for all \(c>0\), and \(E(\cdot)\) admits no minimizer on \(S_d(c)\).

\item[(ii)] If \(\beta<0\), then
\begin{itemize}
\item[(a)] for \(d=1\), \(\gamma(c)<0\) for all \(c>0\), and \(E(\cdot)\) admits a minimizer on \(S_1(c)\) for every \(c>0\);
\item[(b)] for \(d=2,3\), there exists \(c_{d}^*\in(0,\infty)\) such that \(\gamma(c)=0\) for \(c\in(0,c_{d}^*)\), while \(\gamma(c)<0\) for  \(c\in(c_{d}^*,+\infty)\). Moreover, \(E(\cdot)\) admits no minimizer on \(S_d(c)\) for \(c\in(0,c_{d}^*)\), whereas \(E(\cdot)\) admits a minimizer on \(S_d(c)\) for  \(c\in(c_{d}^*,+\infty)\).
\end{itemize}
\end{itemize}
\end{theorem}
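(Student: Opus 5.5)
Throughout I write $E(u)=\frac12\|\nabla u\|_2^2+\frac\beta2\|u\|_4^4+\frac{2\lambda}5\|u\|_5^5$ and use the mass-preserving scaling $u_t(\mathbf x)=t^{d/2}u(t\mathbf x)$, for which
\[
E(u_t)=\tfrac{t^2}{2}\|\nabla u\|_2^2+\tfrac{\beta t^{d}}{2}\|u\|_4^4+\tfrac{2\lambda t^{3d/2}}{5}\|u\|_5^5 .
\]
The first step is well-definedness, i.e. $\gamma(c)>-\infty$. By interpolation, $\|u\|_4^4\le \|u\|_2^{2/3}\|u\|_5^{10/3}$ (Hölder, since $4$ lies between $2$ and $5$ with $\frac14=\frac{\theta}{2}+\frac{1-\theta}{5}$). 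Young's inequality then gives $\frac{|\beta|}{2}\|u\|_4^4\le \frac{2\lambda}{5}\|u\|_5^5+C(\beta,\lambda)c^2$. Hence $E(u)\ge \frac12\|\nabla u\|_2^2-Cc^2$, so $\gamma(c)$ is finite and minimizing sequences are bounded in $H^1$. Sending $t\to0$ in the scaling, all terms of $E(u_t)$ vanish, so $\gamma(c)\le 0$ in every case. For (i), $\beta\ge0$ gives $E>0$ on $S_d(c)$, so $\gamma(c)=0$ is not attained.

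For (ii)(a), with $d=1$ the exponents are $t^2,t^1,t^{3/2}$. As $t\to0$ the negative $t$ term dominates, so $\gamma(c)<0$. For (ii)(b) I argue as follows.

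\textbf{Threshold.} Scaling $u=c\,v$ shows that $\gamma(c)<0$ iff there is $v\in S_d(1)$ with $E(cv)<0$. Moreover $c\mapsto\gamma(c)/c^2$ is nonincreasing: the quadratic term scales like $c^2$ and the quartic like $c^4$, and after an additional spatial dilation the negative term beats the quintic. This is cleanest via $\gamma(\theta c)\le\theta^2\gamma(c)$ for $\theta>1$, obtained by replacing $u$ by $\theta u(\theta^{-2/d}\cdot)$, which keeps the ratio of the potential terms favorable.

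Small $c$ gives $\gamma=0$. By Gagliardo–Nirenberg, $\|u\|_4^4\le C\|\nabla u\|_2^{d}\|u\|_2^{4-d}$. For $d=2$ this gives $E\ge(\frac12-C|\beta|c^2)\|\nabla u\|_2^2>0$ for small $c$. For $d=3$ I combine GN with the interpolation between $\|u\|_2$ and $\|u\|_5$ to bound $|\beta|\|u\|_4^4\le \epsilon(c)(\|\nabla u\|_2^2+\|u\|_5^5)$ with $\epsilon(c)\to0$. Large $c$ gives $\gamma<0$: take a fixed plateau-like profile and let $c\to\infty$ with the density fixed and the support enlarged, i.e. $u=\rho^{1/2}\mathbf 1_{B_R}$ smoothed with $\rho$ chosen so that $\frac\beta2\rho^2+\frac{2\lambda}{5}\rho^{5/2}<0$. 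The bulk energy $\sim R^d$ is negative, while the gradient cost $\sim R^{d-1}$ is lower order. Then set $c_d^*=\inf\{c:\gamma(c)<0\}$, which lies in $(0,\infty)$. By monotonicity of $\gamma(c)/c^2$, $\gamma<0$ for $c>c_d^*$ and $\gamma=0$ below.

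\textbf{Nonexistence for $c<c_d^*$.} If $u$ minimized with $E(u)=0$, then $\theta u(\theta^{-2/d}\cdot)$ with $\theta>1$ close to 1 and $\theta c<c_d^*$ would have strictly negative energy whenever $\|u\|_4^4\ne0$, since the strict inequality in the scaling argument holds. This contradicts $\gamma(\theta c)=0$.

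\textbf{Existence when $\gamma(c)<0$.} I would use concentration-compactness. Take a bounded minimizing sequence; by Schwarz symmetrization (which does not increase $E$) I may assume the terms are radial decreasing. For $d=2,3$ the radial compactness $H^1_{rad}\hookrightarrow L^4,L^5$ (Strauss), or for $d=1$ monotone radial decay plus Lieb's lemma, gives a weak limit $u$ with nontrivial $L^4$ norm, since $\gamma<0$ forces $\liminf\|u_n\|_4>0$. The main obstacle is ruling out mass loss, $\|u\|_2<c$. Here I use the strict subadditivity $\gamma(c)<\gamma(a)+\gamma(\sqrt{c^2-a^2})$, derived from the strict form of $\gamma(\theta a)<\theta^2\gamma(a)$ for $\theta>1$ when $\gamma(a)<0$. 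I combine it with Brezis–Lieb splitting of each term of $E(u_n)$. This yields $\|u\|_2=c$, strong $L^2$ convergence, and by lower semicontinuity and $L^4,L^5$ convergence from interpolation, $u$ is a minimizer.

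In the $\gamma(a)=0$ pieces the strict inequality still holds because $\gamma(c)<0=\gamma(a)+\ldots$ gives the contradiction directly. I expect verifying strict subadditivity uniformly across both regimes to be the delicate step.
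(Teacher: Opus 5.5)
Your architecture matches the paper's proof step for step: an interpolation lower bound, $\gamma\le0$ by mass-preserving scaling, Gagliardo--Nirenberg for small mass, $c_d^*$ defined as an infimum and controlled by a scaling monotonicity, nonexistence by perturbing a zero-energy minimizer, and existence by Schwarz symmetrization plus radial compactness. However, the scaling that carries the monotonicity is wrong as written, and the threshold, nonexistence and mass-loss steps all rest on it. For $w=\theta u(\theta^{-2/d}\cdot)$ one gets $\|w\|_2=\theta^2c$, not $\theta c$. One also gets $\|\nabla w\|_2^2=\theta^{4-4/d}\|\nabla u\|_2^2$, $\|w\|_4^4=\theta^6\|u\|_4^4$ and $\|w\|_5^5=\theta^7\|u\|_5^5$. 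The repulsive quintic term grows fastest, so no bound $E(w)\le\theta^4E(u)$ holds, and for a zero-energy $u$ the sign of $E(w)$ is indefinite. Your heuristic is backwards: amplitude scaling favors the quintic over the quartic. The fix is to drop the amplitude factor. The density-preserving dilation $u_\theta:=u(\theta^{-2/d}\cdot)$ has $\|u_\theta\|_2=\theta c$ and
\[
E(u_\theta)=\theta^2E(u)-\tfrac12\bigl(\theta^2-\theta^{2-4/d}\bigr)\|\nabla u\|_2^2<\theta^2E(u)\qquad(\theta>1,\ u\neq0).
\]
This gives $\gamma(\theta c)\le\theta^2\gamma(c)$. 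For a zero-energy minimizer it gives $E(u_\theta)<0$ for every $u\neq0$; the strictness comes from $\|\nabla u\|_2>0$, not from $\|u\|_4\neq0$.

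With that repair your proof is correct, and it differs from the paper mainly in tools. The paper uses the $L^5$-preserving scaling $s^{-d/5}u(\cdot/s)$, which only yields $\gamma(c_2)\le(c_2/c_1)^{2/3}\gamma(c_1)$. Your dilation is more transparent and gives the stronger fact that $\gamma(c)/c^2$ is nonincreasing. For $c_d^*<\infty$ the paper sends $s\to\infty$ in the same $L^5$ scaling. Your flat-top trial function (negative bulk energy of order $R^d$ against an interface cost of order $R^{d-1}$) is the droplet picture and works equally well.

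In the existence step you use more machinery than needed. Once the symmetric decreasing sequence converges strongly in $L^4\cap L^5$, you can close directly. For $d=2,3$ this is Strauss; for $d=1$ the bound $u_n(x)\le c/\sqrt{2|x|}$ already makes the tails uniformly small, so Lieb's lemma is unnecessary. The weak limit satisfies $\gamma(a)\le E(u)\le\gamma(c)<0$ with $a=\|u\|_2\in(0,c]$. If $a<c$, then $\gamma(c)\le(c/a)^2\gamma(a)<\gamma(a)$, a contradiction. This is exactly the paper's argument, and it uses only the non-strict scaling inequality. So the strict subadditivity and Brezis--Lieb splitting you flagged as the delicate step are never needed.
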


\begin{theorem}[Confining potential case]\label{thm:Vhar}
Let \(d=1,2,3\), \(\lambda>0\) and assume that \(V:\mathbb R^d\to[0,\infty)\) satisfies
\begin{equation} \label{def:V_confine}
\lim_{|\mathbf{x}|\to\infty} V(\mathbf{x})=\infty.
\end{equation}
Then, for every \(c>0\) and \(\beta\in\mathbb R\), the functional \(E(\cdot)\) admits a minimizer on \(S_d(c)\).
\end{theorem}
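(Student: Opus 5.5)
The plan is the direct method of the calculus of variations, working in the weighted space
\[
X:=\Bigl\{u\in H^1(\mathbb R^d):\int_{\mathbb R^d}V|u|^2\,d\mathbf{x}<\infty\Bigr\},
\]
on which \(E\) is finite (set \(E=+\infty\) off \(X\)). Since \(V\ge0\) is measurable, \(\lim_{|\mathbf{x}|\to\infty}V=\infty\), and \(X\neq\emptyset\) (contains \(C_c^\infty\) if \(V\in L^\infty_{\rm loc}\)), \(\gamma(c)<\infty\). The key structural fact is the compact embedding \(X\hookrightarrow L^2(\mathbb R^d)\): a bounded set in \(X\) has uniformly small \(L^2\)-mass outside large balls, because \(\int_{|x|>R}|u|^2\le (\inf_{|x|>R}V)^{-1}\int V|u|^2\), and Rellich handles the ball. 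Interpolating with the \(H^1\) bound then gives compactness in \(L^p\) for \(2\le p<2^*\) (any \(p<\infty\) if \(d=1,2\)), which covers \(p=4,5\) in all dimensions \(d\le3\), since \(5<6=2^*\) for \(d=3\).

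\textbf{Step 1: boundedness from below and coercivity.} The difficulty is only when \(\beta<0\). I would absorb the quartic term pointwise into the quintic and mass terms: for any \(s\ge0\),
\[
\frac{|\beta|}{2}s^4\le \frac{\lambda}{5}s^5+C_{\beta,\lambda}s^2,
\]
which holds because the ratio \(s^4/(s^5+s^2)\) is bounded. Hence
\[
E(u)\ge\int\Bigl(\tfrac12|\nabla u|^2+V|u|^2+\tfrac{\lambda}{5}|u|^5\Bigr)d\mathbf{x}-C_{\beta,\lambda}c^2 \quad\text{on } S_d(c).
\]
So \(\gamma(c)>-\infty\), and any minimizing sequence \((u_n)\) is bounded in \(X\) and has \(\int|u_n|^5\) bounded. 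For \(\beta\ge0\) this is trivial.

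\textbf{Step 2: compactness and lower semicontinuity.} Extract a subsequence with \(u_n\rightharpoonup u\) weakly in \(H^1\) and \(u_n\to u\) strongly in \(L^2\), \(L^4\), \(L^5\) by the compact embedding. Strong \(L^2\) convergence gives \(\|u\|_2=c\), so \(u\in S_d(c)\); this is exactly where the confinement replaces the concentration–compactness needed in Theorem \ref{thm:V0}. The quartic and quintic terms converge. The gradient term is weakly lower semicontinuous, and \(\int V|u|^2\le\liminf\int V|u_n|^2\) by Fatou after passing to an a.e. convergent subsequence. Thus \(E(u)\le\liminf E(u_n)=\gamma(c)\), and \(u\) is a minimizer.

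\textbf{Main obstacle.} The only nontrivial point is the compact embedding \(X\hookrightarrow L^p\) together with the uniform lower bound for \(\beta<0\). Both are handled above: the tail estimate uses the growth of \(V\), and the pointwise absorption inequality uses \(\lambda>0\). No restriction on \(c\) or \(\beta\) arises, which matches the statement.
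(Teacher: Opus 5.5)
Your proposal is correct and follows essentially the same route as the paper. Both arguments bound the energy from below by absorbing the quartic term into the quintic and mass terms, show that a minimizing sequence is bounded in \(X=H^1\cap L_V\), and use the compact embedding \(X\hookrightarrow L^2\cap L^4\cap L^5\) to keep the mass and pass to the limit in the nonlinear terms, finishing with lower semicontinuity of the gradient and potential terms. The paper cites that embedding, whereas you sketch it via the tail estimate and Rellich.

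One small slip: boundedness of \(s^4/(s^5+s^2)\) alone does not give the coefficient \(\lambda/5\) on \(s^5\). Instead, split at \(s_0=5|\beta|/(2\lambda)\). For \(s\ge s_0\) use \(\tfrac{|\beta|}{2}s^4\le\tfrac{\lambda}{5}s^5\), and for \(s<s_0\) use \(\tfrac{|\beta|}{2}s^4\le\tfrac{|\beta|}{2}s_0^2s^2\).
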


\begin{remark}\label{rem:positive-minimizer}
Whenever a minimizer exists, it may be chosen to be nonnegative.
Indeed, for any \(u\in S_d(c)\), one has \(|u|\in S_d(c)\) and \(E(|u|)\le E(u)\).
In the free-space case \(V(\mathbf{x})\equiv0\), one may further choose a minimizer to be symmetric decreasing
by Schwarz rearrangement.
\end{remark}
    
    \subsection{Free-space case: proof of Theorem \ref{thm:V0}}

In this subsection, we work throughout in the free-space case \(V(\mathbf{x})\equiv 0\).
We first collect some basic properties of the constrained energy functional and establish several auxiliary results concerning \(\gamma(c)\).
Our proof is inspired by the argument in \cite{2021On}. A main difference is that the present model does not contain a dipolar interaction term. This allows us to extend the argument and establish the result in a unified way for \(d=1,2,3\).

\begin{lemma}\label{lem:gamma-basic}
Let \(d=1,2,3\), \(\beta\in\mathbb{R}\), \(\lambda>0\), and \(c>0\). Then the functional \(E(\cdot)\) is bounded from below on \(S_d(c)\). Consequently, \(\gamma(c)\) is well defined and satisfies
\[
\gamma(c)\le 0.
\]
\end{lemma}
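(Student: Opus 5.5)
The plan has two parts. For the lower bound, only the quartic term can make \(E\) negative, and only when \(\beta<0\). For \(\gamma(c)\le 0\), we use mass-preserving dilations that spread a fixed profile out so that every term in the energy tends to zero.

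\emph{Lower bound.} If \(\beta\ge 0\), every term of \(E\) is nonnegative on \(S_d(c)\), so \(E\ge 0\) there. If \(\beta<0\), we control \(|u|^4\) pointwise by the quadratic and quintic terms, using no Sobolev inequality. Young's inequality gives, for every \(s\ge0\),
\[
\frac{|\beta|}{2}s^4\le \frac{2\lambda}{5}s^5+K\,s^2,
\qquad K=K(\beta,\lambda)>0.
\]
To see this, note that the function \(s\mapsto \frac{|\beta|}{2}s^2-\frac{2\lambda}{5}s^3\) is bounded above on \([0,\infty)\), and we let \(K\) be its maximum. Integrating the pointwise bound then yields, for all \(u\in S_d(c)\),
\[
E(u)\ge \frac12\|\nabla u\|_2^2 - K\|u\|_2^2\ge -Kc^2 .
\]
Hence \(E\) is bounded below on \(S_d(c)\) and \(\gamma(c)>-\infty\) is well defined. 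The same estimate also shows that \(E(u)\) is finite for every \(u\in H^1\). Indeed, \(\int |u|^5\) is finite for \(d\le 3\) by the Sobolev embedding \(H^1\hookrightarrow L^5\), which holds because \(5<6\) in \(d=3\). This finiteness is needed for the definition of \(E\) on \(S_d(c)\) to make sense.

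\emph{Upper bound \(\gamma(c)\le0\).} Fix any \(u\in S_d(c)\), for instance a Gaussian normalized to have mass \(c\). For \(t>0\), set \(u_t(\mathbf{x})=t^{d/2}u(t\mathbf{x})\). Then \(\|u_t\|_2=c\), so \(u_t\in S_d(c)\), and the terms of the energy scale as
\[
E(u_t)=\frac{t^2}{2}\|\nabla u\|_2^2+\frac{\beta t^{d}}{2}\|u\|_4^4+\frac{2\lambda t^{3d/2}}{5}\|u\|_5^5 .
\]
All exponents \(2\), \(d\) and \(3d/2\) are positive, so \(E(u_t)\to0\) as \(t\to0^+\). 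Therefore \(\gamma(c)\le \lim_{t\to0^+}E(u_t)=0\).

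The only point needing care is the choice of constant in the pointwise Young inequality. The argument is otherwise routine, and no single step is a genuine obstacle. I would write \(K=\max_{s\ge0}\bigl(\frac{|\beta|}{2}s^2-\frac{2\lambda}{5}s^3\bigr)\), which is attained at \(s=\frac{5|\beta|}{6\lambda}\) and gives the explicit value \(K=\frac{125|\beta|^3}{432\lambda^2}\).
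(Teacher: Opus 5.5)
Your proof is correct. The upper bound \(\gamma(c)\le 0\) is exactly the paper's argument: the mass-preserving dilation \(u^{2,s}\) with \(s\to0^+\). The difference lies in the lower bound.

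The paper works at the level of norms. It interpolates \(\|u\|_4\le\|u\|_2^{\theta}\|u\|_5^{1-\theta}\) and then applies Young's inequality to obtain \(\|u\|_4^4\le\varepsilon\|u\|_5^5+C_{\varepsilon,c}\). This keeps a positive multiple of \(\|u\|_5^5\) in the bound but leaves \(C_{\varepsilon,c}\) unspecified.

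You instead apply Young pointwise, \(\frac{|\beta|}{2}|u|^4\le\frac{2\lambda}{5}|u|^5+K|u|^2\), and integrate. This needs no interpolation inequality and immediately gives the explicit bound \(E(u)\ge\frac12\|\nabla u\|_2^2-Kc^2\). That is all the paper needs later: the lower bound is used only for \(\gamma(c)>-\infty\) and to show that minimizing sequences are bounded in \(H^1\) (and in \(L_V\) when a potential is present). Both facts follow from your estimate.

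One arithmetic slip. At \(s_*=\frac{5|\beta|}{6\lambda}\) one has \(\frac{2\lambda}{5}s_*^3=\frac{|\beta|}{3}s_*^2\), so \(K=\frac{|\beta|}{6}s_*^2=\frac{25|\beta|^3}{216\lambda^2}\), not \(\frac{125|\beta|^3}{432\lambda^2}\). Since the value you wrote exceeds the true maximum, your inequality remains valid and nothing else is affected.

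With the correct constant, your route has a nice by-product. The value \(-Kc^2\) is exactly the kinetic-free flat-top energy of Section~\ref{sec:experiment}: for \(c=1\) it equals \(E_{\mathrm{app}}^{\min}\), and your maximizer \(s_*\) is the flat-top height \(a\). So your estimate turns that heuristic value into a rigorous lower bound for \(\gamma(c)\). This is consistent with every computed energy in Tables~\ref{tab:beta_estimate} and~\ref{tab:lambda_estimate} lying above \(E_{\mathrm{app}}^{\min}\).
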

    
\begin{proof}
For \(u\in S_d(c)\),
\begin{equation}\label{proof:Lem31_energy}
E(u)=\frac12\|\nabla u\|_2^2+\frac{\beta}{2}\|u\|_4^4+\frac{2\lambda}{5}\|u\|_5^5.
\end{equation}
By interpolation between \(L^2\) and \(L^5\), there exists \(\theta\in(0,1)\) such that
\[
\|u\|_4 \le \|u\|_2^\theta \|u\|_5^{1-\theta} = c^{\theta}\|u\|_5^{1-\theta} .
\]
Hence
\[
\|u\|_4^4 \le c^{4\theta}\|u\|_5^{4(1-\theta)}.
\]
Since \(4(1-\theta)<5\), Young's inequality implies that, for every \(\varepsilon>0\),
\begin{equation}\label{proof:Lem31_young}
\|u\|_4^4 \le \varepsilon \|u\|_5^5 + C_{\varepsilon,c}, 
\end{equation}
where $C_{\varepsilon,c}$ is some constant depending on $c$ and $\varepsilon$. 
Combining \eqref{proof:Lem31_young} and \eqref{proof:Lem31_energy}, we obtain
\begin{equation}\label{proof:Lem31_energy_lowerbound}
E(u)\ge \frac12\|\nabla u\|_2^2
+\Bigl(\frac{2\lambda}{5}-\frac{|\beta|}{2}\varepsilon\Bigr)\|u\|_5^5
-\dfrac{|\beta|}{2}C_{\varepsilon,c}.
\end{equation}
Choosing \(\varepsilon>0\) sufficiently small such that $\frac{2\lambda}{5}-\frac{|\beta|}{2}\varepsilon>0$, we conclude that \(E(\cdot)\) is bounded from below on \(S_d(c)\).

Next, for \(u\in S_d(c)\) and \(s>0\), define the \(L^2\)-preserving scaling 
\begin{equation}\label{eq:L2-scaling}
u^{2,s}(x):=s^{d/2}u(sx).
\end{equation}
Then \(u^{2,s}\in S_d(c)\) and
\begin{equation}  \label{eq:E_L2-scaling}
E(u^{2,s})
=
\frac{s^2}{2}\|\nabla u\|_2^2
+\frac{\beta}{2}s^d\|u\|_4^4
+\frac{2\lambda}{5}s^{3d/2}\|u\|_5^5.
\end{equation}
Hence \(E(u^{2,s})\to0\) as \(s\to0^+\), and therefore \(\gamma(c)\le0\).
\end{proof}

\begin{lemma}\label{lem:gamma-mono}
Assume that \(\beta<0\). Then the map
\[
c\mapsto\gamma(c)
\]
is nonincreasing on \((0,\infty)\). Moreover, it is strictly decreasing on the set
\[
\{c>0:\gamma(c)<0\}.
\]
\end{lemma}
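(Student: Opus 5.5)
The plan is to compare $\gamma(c)$ and $\gamma(tc)$ for $t>1$ through the amplitude scaling $u\mapsto tu$, which maps $S_d(c)$ onto $S_d(tc)$. For $u\in S_d(c)$ and $t>1$ one has
\[
E(tu)=\frac{t^2}{2}\|\nabla u\|_2^2+\frac{\beta}{2}t^4\|u\|_4^4+\frac{2\lambda}{5}t^5\|u\|_5^5 .
\]
Because this expression mixes the powers $t^2$, $t^4$ and $t^5$, a direct comparison with $t^2E(u)$ is not immediate. I would therefore first rescale in space with the $L^2$-preserving map $u^{2,s}$ of Lemma~\ref{lem:gamma-basic}, or equivalently use a mass-changing dilation $u\mapsto u(\cdot/\sigma)$, so that all terms pick up compatible factors.

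The cleanest choice is $v(x)=u(x/\sigma)$ with $\sigma=t^{2/d}$. Then $\|v\|_2^2=\sigma^d c^2=t^2c^2$, so $v\in S_d(tc)$, and
\[
E(v)=\frac{\sigma^{d-2}}{2}\|\nabla u\|_2^2+\sigma^d\Big(\frac\beta2\|u\|_4^4+\frac{2\lambda}{5}\|u\|_5^5\Big)
= t^2E(u)-\frac{t^2}{2}\bigl(1-t^{-4/d}\bigr)\|\nabla u\|_2^2 .
\]
Since $t>1$, the last term is nonpositive, which gives $E(v)\le t^2E(u)$.

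Taking the infimum over $u$ yields $\gamma(tc)\le t^2\gamma(c)$ for every $t>1$. Monotonicity then follows in two cases. If $\gamma(c)\le 0$ (which always holds by Lemma~\ref{lem:gamma-basic}), then $t^2\gamma(c)\le\gamma(c)$, hence $\gamma(tc)\le\gamma(c)$ and $\gamma$ is nonincreasing. If $\gamma(c)<0$, then $t^2\gamma(c)<\gamma(c)$ strictly, so $\gamma(tc)<\gamma(c)$. This is exactly strict decrease on $\{c:\gamma(c)<0\}$: for $c_1<c_2$ with $\gamma(c_1)<0$ we get $\gamma(c_2)<\gamma(c_1)$, and the set is automatically upward closed.

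The main obstacle is a possible subtlety, namely whether strictness requires $\gamma(c_2)<0$ rather than $\gamma(c_1)<0$. If $\gamma(c_2)<0$ but $\gamma(c_1)=0$, strictness holds trivially. If both are negative, the argument above applies. I also note that the argument never uses $\beta<0$ beyond context, so I would simply remark that the hypothesis ensures the set is nonempty. No compactness or minimizer is required; the only care needed is the exact bookkeeping of exponents in the dilation, in particular that the factor $t^{2-4/d}$ is at most $t^2$.
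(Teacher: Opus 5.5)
Your proof is correct. It has the same skeleton as the paper's: build a mass-increasing rescaling $T$ with $E(Tu)\le\kappa E(u)$ for some $\kappa>1$, take the infimum, and use $\gamma\le 0$ from Lemma~\ref{lem:gamma-basic}. The difference is the choice of scaling.

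The paper uses the $L^5$-preserving map $u^{5,s}(x)=s^{-d/5}u(x/s)$ with $s=(c_2/c_1)^{10/(3d)}$. Under this map the mass, the kinetic term and the quartic term each pick up a different power of $s$. To reach $E(u^{5,s})\le s^{d/5}E(u)$ the paper discards two remainders: a kinetic one whose sign needs $d<5$, and an LHY one whose sign needs $\lambda\ge0$. The outcome is $\gamma(c_2)\le (c_2/c_1)^{2/3}\gamma(c_1)$.

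Your pure dilation $u(x/\sigma)$ scales the mass and both nonlinear terms by the same factor $\sigma^d=t^2$, so only the kinetic term has to be compared. This gives the stronger bound $\gamma(c_2)\le (c_2/c_1)^{2}\gamma(c_1)$. The underlying inequality $E(v)\le t^2E(u)$ holds in every dimension and for any signs of $\beta$ and $\lambda$. You only need $\gamma(c)\in(-\infty,0]$, which Lemma~\ref{lem:gamma-basic} supplies.

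Your bookkeeping is also more transparent. In the paper's identity \eqref{eq:E_t-L5scaling}, the coefficient of $\|\nabla u\|_2^2$ should be $\frac12\bigl(s^{\frac{3d}{5}-2}-s^{\frac{d}{5}}\bigr)$; the printed sign is reversed. The conclusion \eqref{eq:mono-est} is unaffected.

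The paper's choice mainly buys uniformity: the same $u^{5,s}$ family is reused in Lemma~\ref{lem:gamma-sign} and in Case 2 of the proof of Theorem~\ref{thm:V0}.

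Your side remark is right: $\beta<0$ plays no role in the argument. For $\beta\ge 0$ one has $\gamma\equiv 0$, and the statement is trivially true.
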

\begin{proof}
For \(u\in H^1(\mathbb R^d)\) and \(s>0\), define the \(L^5\)-preserving scaling
\begin{equation}\label{eq:L5-scaling}
u^{5,s}(x):=s^{-d/5}u(s^{-1}x).
\end{equation}
Then
\begin{equation}
\label{eq:mass_L5-scaling}
\|u^{5,s}\|_2=s^{\frac{3d}{10}}\|u\|_2,
\end{equation}
and
\begin{equation}
\label{eq:E_L5-scaling}
E(u^{5,s})
=
\frac12 s^{\frac{3d}{5}-2} \|\nabla u\|_2^2
+\frac{\beta}{2} s^{\frac{d}{5}} \|u\|_4^4
+\frac{2\lambda}{5}\|u\|_5^5,
\end{equation}
By the expression for the energy functional \eqref{proof:Lem31_energy},
\[
\frac{\beta}{2}\|u\|_4^4
=
E(u)-\frac12\|\nabla u\|_2^2-\frac{2\lambda}{5}\|u\|_5^5,
\]
and substituting it into \eqref{eq:E_L5-scaling}, we obtain
\begin{equation}\label{eq:E_t-L5scaling}
E(u^{5,s})
=
s^{\frac{d}{5}} E(u)
+\frac12(s^{\frac{d}{5}} - s^{\frac{3d}{5}-2})\|\nabla u\|_2^2
-\frac{2\lambda}{5}(s^{\frac{d}{5}}-1)\|u\|_5^5.
\end{equation}
Hence, for every \(s>1\), since \(d=1,2,3\), we have
\[
s^{\frac{d}{5}}-s^{\frac{3d}{5}-2}>0
\quad\text{and}\quad
s^{\frac{d}{5}}-1>0,
\]
and thus
\begin{equation}\label{eq:mono-est}
E(u^{5,s})\le s^{\frac{d}{5}} E(u).
\end{equation}

Now we show that \(c\mapsto \gamma(c)\) is nonincreasing. Let \(0<c_1<c_2\), and let \(\{u_n\}\subset S_d(c_1)\) be a minimizing sequence such that
\begin{equation}
E(u_n)\to\gamma(c_1).
\end{equation}
Set
\begin{equation}\label{proof:Lem_s}
s=\Bigl(\frac{c_2}{c_1}\Bigr)^{\frac{10}{3d}}>1.
\end{equation}
Then 
\[\|u_n^{5,s}\|_2=c_2\quad  \text{and thus} \quad u_n^{5,s}\in S_d(c_2).
\]
 By \eqref{eq:mono-est},
\[
\gamma(c_2)\le E(u_n^{5,s})\le s^{\frac{d}{5}} E(u_n).
\]
Passing to the limit gives
\begin{equation}\label{proof:Lem32_gamma}
\gamma(c_2)\le s^{\frac{d}{5}} \gamma(c_1).
\end{equation}
By \eqref{proof:Lem_s} and Lemma~\ref{lem:gamma-basic}, we have \(s^{\frac{d}{5}}>1\) and \(\gamma(c_1)\le 0\). Hence \eqref{proof:Lem32_gamma} yields
\[
\gamma(c_2)\le \gamma(c_1),
\]
which proves that \(c\mapsto\gamma(c)\) is nonincreasing.

If in addition \(\gamma(c_1)<0\), then by \eqref{proof:Lem_s},
\[
s^{\frac{d}{5}}\gamma(c_1)<\gamma(c_1).
\]
Combining this with \eqref{proof:Lem32_gamma}, we obtain
\[
\gamma(c_2)<\gamma(c_1).
\]
Therefore, \(c\mapsto\gamma(c)\) is strictly decreasing on \(\{c>0:\gamma(c)<0\}\).

\end{proof}

\begin{lemma}\label{lem:small-mass}
Assume that \(\beta<0\). If \(d=2,3\), then there exists \(\hat c_d>0\) such that
\[
E(u)>0 \qquad \text{for all } u\in S_d(c), \quad c\in(0,\hat c_d).
\]
Consequently,
\[
\gamma(c)=0 \qquad \text{for all } c\in(0,\hat c_d),
\]
and \(E(\cdot)\) has no minimizer on \(S_d(c)\) for such \(c\).
\end{lemma}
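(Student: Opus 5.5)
The plan is to show that the negative quartic term is dominated by the kinetic and LHY terms once the mass is small, using a Gagliardo--Nirenberg-type interpolation in $d=2,3$. Write
\[
E(u)=\frac12\|\nabla u\|_2^2-\frac{|\beta|}{2}\|u\|_4^4+\frac{2\lambda}{5}\|u\|_5^5 .
\]
We must bound $\|u\|_4^4$ by a product of powers of $\|\nabla u\|_2^2$ and $\|u\|_5^5$, times a positive power of $c$. Then AM--GM/Young closes the argument for small $c$.

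\textbf{Case $d=2$.} The Gagliardo--Nirenberg (Ladyzhenskaya) inequality gives
\[
\|u\|_4^4\le C_{GN}\|u\|_2^2\|\nabla u\|_2^2=C_{GN}c^2\|\nabla u\|_2^2 .
\]
Hence $E(u)\ge \frac12\bigl(1-|\beta|C_{GN}c^2\bigr)\|\nabla u\|_2^2+\frac{2\lambda}{5}\|u\|_5^5$. This is strictly positive for $u\neq0$ as soon as $c^2<1/(|\beta|C_{GN})$; in fact the $L^5$ term alone is already positive for $u\ne 0$.

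\textbf{Case $d=3$.} Here $\|\nabla u\|_2^2$ alone cannot control $\|u\|_4^4$ with a mass factor, since the cubic term is $L^2$-supercritical. Instead we interpolate $L^4$ between $L^2$ and $L^5$ with a suitable exponent, and the other way between $L^2$ and $L^6$, using Sobolev $\|u\|_6\le S\|\nabla u\|_2$.

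The first natural attempt is to find $a,b\ge0$ and $\mu>0$ with
\[
\|u\|_4^4\le C c^{\mu}\bigl(\|\nabla u\|_2^2\bigr)^{a}\bigl(\|u\|_5^5\bigr)^{b},\qquad a+b=1 .
\]
Then Young gives $\|u\|_4^4\le Cc^{\mu}\bigl(\|\nabla u\|_2^2+\|u\|_5^5\bigr)$, and for small $c$ this yields $E(u)\ge \kappa(\|\nabla u\|_2^2+\|u\|_5^5)>0$.

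Scaling dictates the admissible exponents. Under $u\mapsto s^{3/2}u(s\cdot)$, which preserves mass, the three quantities scale as $s^3$, $s^2$ and $s^{9/2}$ respectively. The condition $3=2a+\frac92 b$ with $a+b=1$ gives $b=\frac25$, $a=\frac35$. Under amplitude scaling $u\mapsto tu$ we need $4=\mu+2a+5b=\mu+\frac65+2$, so $\mu=\frac45>0$.

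To realize this, use Hölder to write $\|u\|_4^4\le\|u\|_2^{\theta_1}\|u\|_5^{\theta_2}\|u\|_6^{\theta_3}$ with $\theta_1+\theta_2+\theta_3=4$ and $\frac{\theta_1}{2}+\frac{\theta_2}{5}+\frac{\theta_3}{6}=1$. Taking $\theta_3=2a=\frac65$ and $\theta_2=5b=2$ forces $\theta_1=\frac45$. Check: $\frac25+\frac25+\frac15=1$. This is valid, and then Sobolev turns $\|u\|_6^{6/5}$ into $S^{6/5}\|\nabla u\|_2^{6/5}$.

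We then set
\[
\hat c_3^{4/5}=\frac{\min\{\frac12,\frac{2\lambda}{5}\}}{|\beta|C/2}
\]
(any smaller positive value also works). For $u\ne0$ this gives $E(u)>0$. For $d=2$ the same Hölder/Sobolev-free route with Ladyzhenskaya works, so both cases yield an explicit $\hat c_d$.

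\textbf{Consequences.} Since $E(u)>0$ on $S_d(c)$ and $\gamma(c)\le0$ by Lemma~\ref{lem:gamma-basic}, we get $\gamma(c)=0$. A minimizer $u$ would satisfy $E(u)=0$, contradicting $E(u)>0$, so no minimizer exists.

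The main obstacle is the $d=3$ case: choosing the three-term Hölder exponents so that they simultaneously match the scaling and leave a strictly positive power of $c$. The exponent check above indicates that this works.
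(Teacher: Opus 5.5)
Your proof is correct and follows the paper's argument: Ladyzhenskaya for $d=2$, and for $d=3$ the same mixed bound $\|u\|_4^4\le C c^{4/5}\|\nabla u\|_2^{6/5}\|u\|_5^2$ followed by Young's inequality. The differences are cosmetic. You obtain this bound from a single three-factor H\"older inequality, whereas the paper takes a geometric mean of the H\"older--Sobolev estimate and the $L^2$--$L^5$ interpolation estimate. Your unweighted Young split also puts $c^{4/5}$ on both terms, which gives an explicit $\hat c_3$ in place of the paper's choose-$\varepsilon$-then-$c$ step.
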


\begin{proof}
We first consider the case \(d=2\). By the Gagliardo--Nirenberg inequality,
\[
\|u\|_4^4\le C_1\|u\|_2^2\|\nabla u\|_2^2
= C_1c^2\|\nabla u\|_2^2.
\]
Since \(\beta<0\), it follows from \eqref{proof:Lem31_energy} that 
\[
E(u)\ge \frac12(1+\beta C_1c^2)\|\nabla u\|_2^2+\frac{2\lambda}{5}\|u\|_5^5.
\]
The right-hand side is positive for all \(c>0\) sufficiently small.

For \(d=3\), Hölder's inequality and Sobolev's embedding give
\begin{equation}\label{proof:Lem33_holder}
\|u\|_4^4 \le \|u\|_2\|u\|_6^3 \le C_2c\,\|\nabla u\|_2^3,
\end{equation}
while interpolation between \(L^2\) and \(L^5\) yields
\begin{equation}\label{proof:Lem33_inte}
\|u\|_4^4 \le \|u\|_2^{2/3}\|u\|_5^{10/3}
= c^{2/3}\|u\|_5^{10/3}.
\end{equation}
Taking the \(2/5\)-power of \eqref{proof:Lem33_holder} and the \(3/5\)-power of \eqref{proof:Lem33_inte}, and then multiplying the resulting inequalities, we obtain
\begin{equation}\label{proof:Lem33_mix}
\|u\|_4^4
\le
C_2^{2/5}\,c^{4/5}\|\nabla u\|_2^{6/5}\|u\|_5^2.
\end{equation}
Since the right-hand side of \eqref{proof:Lem33_mix} is of the form
\[
c^{4/5}\bigl(\|\nabla u\|_2^2\bigr)^{3/5}\bigl(\|u\|_5^5\bigr)^{2/5},
\]
Young's inequality implies that for every \(\varepsilon>0\),
\begin{equation}\label{proof:Lem33_young}
\|u\|_4^4
\le
\varepsilon \|\nabla u\|_2^2
+
C_{2,\varepsilon}c^2\|u\|_5^5,
\end{equation}
where \(C_{2,\varepsilon}>0\) is a constant depending only on \(C_2\) and \(\varepsilon\).
Combining \eqref{proof:Lem33_young} and \eqref{proof:Lem31_energy}, we obtain
\[
E(u)\ge
\Bigl(\frac12-\frac{|\beta|}{2}\varepsilon\Bigr)\|\nabla u\|_2^2
+
\Bigl(\frac{2\lambda}{5}-\frac{|\beta|}{2}C_{2,\varepsilon} c^2\Bigr)\|u\|_5^5.
\]
Choosing first \(\varepsilon>0\) sufficiently small and then \(c>0\) sufficiently small, we ensure that both coefficients on the right-hand side are positive.  Consequently,
\[
E(u)> 0 \qquad \text{for all } u\in S_3(c),
\]
and hence \(\gamma(c)\ge 0\). Combining this with Lemma~\ref{lem:gamma-basic}, which gives \(\gamma(c)\le 0\), we conclude that \(\gamma(c)=0\) for all sufficiently small \(c>0\).

Thus, for \(d=2,3\), there exists \(\hat c_d>0\) such that, for every \(c\in(0,\hat c_d)\),
\[
E(u)>0 \qquad \text{for all } u\in S_d(c).
\]
By Lemma~\ref{lem:gamma-basic}, \(\gamma(c)\le 0\). Hence
\[
\gamma(c)=0 \qquad \text{for all } c\in(0,\hat c_d).
\]
Since \(E(u)>0\) for every \(u\in S_d(c)\), the infimum \(\gamma(c)\) is not attained on \(S_d(c)\).

\end{proof}

\begin{lemma}\label{lem:gamma-sign}
Assume that \(\beta<0\).

\begin{itemize}
\item[(i)] If \(d=1\), then
\[
\gamma(c)<0 \qquad \text{for all } c>0.
\]

\item[(ii)] If \(d=2,3\), then there exists \(c_{d}^*\in(0,\infty)\) such that
\[
\gamma(c)=0 \quad \text{for } c\in(0,c_{d}^*),\qquad
\gamma(c)<0 \quad \text{for } c\in(c_{d}^*,\infty).
\]
\end{itemize}
\end{lemma}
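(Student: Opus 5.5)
For part (i), I will use the $L^2$-preserving scaling \eqref{eq:L2-scaling} with the explicit energy \eqref{eq:E_L2-scaling}. Fix any $u\in S_1(c)$, for instance a Gaussian rescaled to have mass $c^2$. For $d=1$ the three terms scale like $s^2$, $s$ and $s^{3/2}$. Since $\beta<0$, the cubic (quartic-power) term $\frac{\beta}{2}s\|u\|_4^4$ is negative and, as $s\to0^+$, dominates both $s^{3/2}$ and $s^2$. Hence $E(u^{2,s})<0$ for $s$ small, and so $\gamma(c)<0$ for every $c>0$.

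For part (ii), I will define
\[
c_d^*:=\inf\{c>0:\gamma(c)<0\}
\]
and carry out three steps.

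\textbf{Step 1 (the set is nonempty).} Fix a nonzero $u\in H^1(\mathbb R^d)$ and use the $L^5$-preserving scaling \eqref{eq:L5-scaling}. By \eqref{eq:E_L5-scaling},
\[
E(u^{5,s})=\tfrac12 s^{3d/5-2}\|\nabla u\|_2^2+\tfrac{\beta}{2}s^{d/5}\|u\|_4^4+\tfrac{2\lambda}{5}\|u\|_5^5 .
\]
For $d=2,3$ we have $3d/5-2<d/5$ (the exponents are $-4/5<2/5$ and $-1/5<3/5$). So as $s\to\infty$ the negative term $\frac{\beta}{2}s^{d/5}\|u\|_4^4$ tends to $-\infty$ and dominates both the decaying gradient term and the constant $L^5$ term. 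Thus $E(u^{5,s})<0$ for some large $s$. Writing $c_0=\|u^{5,s}\|_2$, this gives $\gamma(c_0)<0$, so the set $\{\gamma<0\}$ is nonempty and $c_d^*<\infty$.

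\textbf{Step 2 (the threshold is positive).} By Lemma~\ref{lem:small-mass}, $\gamma(c)=0$ on $(0,\hat c_d)$, hence $c_d^*\ge\hat c_d>0$.

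\textbf{Step 3 (the dichotomy).} By Lemma~\ref{lem:gamma-mono}, $\gamma$ is nonincreasing, and by Lemma~\ref{lem:gamma-basic}, $\gamma\le0$. If $c>c_d^*$, pick $c'\in(c_d^*,c)$ with $\gamma(c')<0$, which exists by the definition of the infimum. Then $\gamma(c)\le\gamma(c')<0$. If $c<c_d^*$, then by definition $\gamma(c)\ge0$, and combined with $\gamma(c)\le0$ this gives $\gamma(c)=0$.

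There is no serious obstacle in this argument. The only points needing care are the exponent comparisons in the scalings: $s\to0$ for $d=1$ and $s\to\infty$ for $d=2,3$. The value of $\gamma$ at $c_d^*$ itself is not claimed and need not be addressed.
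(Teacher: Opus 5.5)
Your proposal is correct and follows essentially the same route as the paper: the $L^2$-preserving scaling with $s\to0^+$ for $d=1$, and for $d=2,3$ the small-mass lemma, the $L^5$-preserving scaling with $s\to\infty$ to produce some $c_0$ with $\gamma(c_0)<0$, and monotonicity of $\gamma$ applied to $c_d^*=\inf\{c>0:\gamma(c)<0\}$. Your Step 3 just writes out the monotonicity argument that the paper compresses into one sentence, and notes that the case $c<c_d^*$ only needs the definition of the infimum together with $\gamma\le0$.
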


\begin{proof}
For \(d=1\), let \(u\in S_1(c)\), and let \(u^{2,s}\) be defined by the \(L^2\)-preserving scaling \eqref{eq:L2-scaling}. Then \(u^{2,s}\in S_1(c)\), and \eqref{eq:E_L2-scaling} becomes
\begin{equation}\label{proof:Lem34_energy1}
E(u^{2,s})
=
\frac{s^2}{2}\|u'\|_2^2
+\frac{\beta}{2}s\|u\|_4^4
+\frac{2\lambda}{5}s^{3/2}\|u\|_5^5.
\end{equation}
Since \(u\in S_1(c)\) with \(c>0\), we have \(u\not\equiv0\), and hence \(\|u\|_4>0\). It follows from \(\beta<0\) and \eqref{proof:Lem34_energy1} that
\[
E(u^{2,s})=\frac{\beta}{2}s\|u\|_4^4+o(s)<0
\qquad\text{as } s\to0^+.
\]
Thus \(\gamma(c)\le E(u^{2,s})<0\), proving \((i)\).

For \(d=2,3\), Lemma~\ref{lem:small-mass} yields \(\hat c_d>0\) such that
\[
\gamma(c)=0 \qquad \text{for all } c\in(0,\hat c_d).
\]
On the other hand, fix a nontrivial function \(u\in C_c^\infty(\mathbb R^d)\), and let \(u^{5,s}\) be defined by the \(L^5\)-preserving scaling \eqref{eq:L5-scaling}. 
Then
\begin{equation}\label{proof:Lem34_energy23}
E(u^{5,s})
=
\frac12 s^{\frac{3d}{5}-2}\|\nabla u\|_2^2
+\frac{\beta}{2}s^{\frac{d}{5}}\|u\|_4^4
+\frac{2\lambda}{5}\|u\|_5^5.
\end{equation}
Since \(u\in S_d(c)\) with \(c>0\), we have \(u\not\equiv0\), and hence \(\|u\|_4>0\). Since \(\frac{3d}{5}-2<0\) for \(d=2,3\), the gradient term in \eqref{proof:Lem34_energy23} tends to \(0\) as \(s\to\infty\), while the quartic term tends to \(-\infty\) when \(\beta<0\). Therefore,
\[
E(u^{5,s})\to -\infty \qquad \text{as } s\to+\infty.
\]
In particular, there exists \(s_0>0\) such that
\[
E(u^{5,s_0})<0.
\]
Let
\[
c_0:=\|u^{5,s_0}\|_2=s_0^{\frac{3d}{10}}\|u\|_2>0.
\]
Then
\[
\gamma(c_0)\le E(u^{5,s_0})<0.
\]
Define
\begin{equation}\label{proof:Lem34_cds}
c_{d}^*:=\inf\{c>0:\gamma(c)<0\}.
\end{equation}
Since \(\gamma(\hat c_d)=0\), \(\gamma(c_0)<0\), and \(c\mapsto\gamma(c)\) is nonincreasing by Lemma~\ref{lem:gamma-mono}, we obtain
\[
c_{d}^*\in[\hat c_d,c_0]\subset(0,\infty).
\]
By the same monotonicity property, we further deduce that
\[
\gamma(c)=0 \quad \text{for } c\in(0,c_{d}^*),\qquad
\gamma(c)<0 \quad \text{for } c\in(c_{d}^*,\infty).
\]
This proves \((ii)\).
\end{proof}

We are now in a position to prove Theorem \ref{thm:V0}. The above lemmas determine the basic properties, monotonicity, and sign structure of \(\gamma(c)\), from which the existence and nonexistence assertions follow.

\begin{proof}[Proof of Theorem \ref{thm:V0}]
\textbf{Case 1.} \(\beta\ge0\).
For every \(u\in S_d(c)\),
\[
E(u)=\frac12\|\nabla u\|_2^2+\frac{\beta}{2}\|u\|_4^4+\frac{2\lambda}{5}\|u\|_5^5 \,> 0.
\]
On the other hand, Lemma \ref{lem:gamma-basic} gives \(\gamma(c)\le0\). Hence
\[
\gamma(c)=0 \qquad \text{for all } c>0.
\]
Since \(E(u)>0\) for all \(u\in S_d(c)\), the infimum \(0\) is not attained on \(S_d(c)\). Therefore, \(E(\cdot)\) admits no minimizer on \(S_d(c)\).

\textbf{Case 2.} \(\beta<0\) and \(\gamma(c)=0\). 
By Lemma~\ref{lem:gamma-sign}, this occurs when \(d=2,3\) and \(c\in(0,c_{d}^*)\). We show that \(E\) has no minimizer on \(S_d(c)\) for \(c\in(0,c_{d}^*)\). Indeed, if \(u\in S_d(c)\) were a minimizer, then \(E(u)=0\). For any \(s>1\), let \(u^{5,s}\) be defined by \eqref{eq:L5-scaling}. 
Recalling \eqref{eq:mass_L5-scaling} and \eqref{eq:mono-est}, we have
\begin{equation}\label{case2:energy}
\|u^{5,s}\|_2=s^{\frac{3d}{10}}c, 
\qquad E(u^{5,s})< s^{\frac{d}{5}} E(u)=0.
\end{equation}
Since \(c<c_d^*\) and \(\|u^{5,s}\|_2\to c\) as \(s\to1^+\), there exists \(s>1\) such that
\begin{equation}\label{case2:cds}
c<\|u^{5,s}\|_2<c_d^*.
\end{equation}
By \eqref{case2:cds} and Lemma~\ref{lem:gamma-sign}, it follows that
\begin{equation}\label{case2:gamma0}
\gamma(\|u^{5,s}\|_2)=0.
\end{equation}
On the other hand, since \(u^{5,s}\in S_d(\|u^{5,s}\|_2)\), by the definition of \(\gamma\) in \eqref{def:S} and \eqref{case2:energy},
\begin{equation}\label{case2:gamma}
\gamma(\|u^{5,s}\|_2)\le E(u^{5,s})<0.
\end{equation}
This contradicts \eqref{case2:gamma0}. Hence no minimizer exists on \(S_d(c)\) for \(c\in(0,c_d^*)\).

\textbf{Case 3.} \(\beta<0\) and \(\gamma(c)<0\).
By Lemma~\ref{lem:gamma-sign}, this occurs for all \(c>0\) when \(d=1\), and for all \(c>c_d^*\) when \(d=2,3\).
Denote by
\[
\mathcal R_d
:=
\bigl\{
u^* : u^* \text{ is the symmetric decreasing rearrangement of } |u|,
\ u\in H^1(\mathbb R^d)
\bigr\}.
\]
Let \(\{u_n\}\subset S_d(c)\) be a minimizing sequence, and for each \(n\) let
\(u_n^*\in\mathcal R_d\) denote the symmetric decreasing rearrangement of \(|u_n|\). Then \(u_n^*\ge0\) and
\begin{equation}\label{case3:norm245}
\|u_n^*\|_p=\|u_n\|_p, \qquad p=2,4,5.
\end{equation}
Moreover, by the Pólya--Szegő inequality,
\begin{equation}\label{case3:nabla_norm}
\|\nabla u_n^*\|_2\le \|\nabla u_n\|_2.
\end{equation}
Combining \eqref{case3:norm245}, \eqref{case3:nabla_norm}, and \eqref{energy},
it follows that
\[
E(u_n^*)\le E(u_n).
\]
Hence \(\{u_n^*\}\subset S_d(c)\) is still a minimizing sequence. Replacing \(u_n\) by \(u_n^*\) if necessary, we may therefore assume that
\[
u_n\in \mathcal R_d \qquad \text{for all } n.
\]
For this sequence, we have
\[
E(u_n)\to \gamma(c)
\qquad \text{as } n\to\infty.
\]
Since \(\gamma(c)>-\infty\), the sequence \(\{E(u_n)\}\) is bounded. Recalling \eqref{proof:Lem31_energy_lowerbound} from the proof of Lemma~\ref{lem:gamma-basic}, we infer that \(\{\nabla u_n\}\) is bounded in \(L^2(\mathbb R^d)\). Therefore, \(\{u_n\}\) is bounded in \(H^1(\mathbb R^d)\).
Hence, up to a subsequence,
\[
u_n \rightharpoonup u
\qquad \text{weakly in } H^1(\mathbb R^d).
\]
Since the embedding \(\mathcal R_d \hookrightarrow L^4(\mathbb R^d)\cap L^5(\mathbb R^d)\) is compact, we further have
\[
u_n \to u
\qquad \text{strongly in } L^4(\mathbb R^d)\cap L^5(\mathbb R^d).
\]
Therefore, by weak lower semicontinuity of the gradient term and the strong convergence of the nonlinear terms,
\begin{equation}\label{ineq:E_u}
E(u)\le \liminf_{n\to\infty} E(u_n)=\gamma(c).
\end{equation}

Next we show that \(u\in S_d(c)\). Let \(a:=\|u\|_2\). Since the above compact embedding does not include \(L^2(\mathbb R^d)\), we only know from the weak convergence \(u_n\rightharpoonup u\) in \(L^2(\mathbb R^d)\) and \(\|u_n\|_2=c\) that \(a\le c\). Hence
\[
\gamma(a)\le E(u)\le \gamma(c)\le \gamma(a),
\]
where the first inequality follows from the definition of \(\gamma(\cdot)\) \eqref{def:S}, the second from \eqref{ineq:E_u}, and the last from Lemma~\ref{lem:gamma-mono}. Thus
\[
\gamma(a)=\gamma(c).
\]
If \(a<c\), then \(\gamma(a)=\gamma(c)<0\), which contradicts the strict monotonicity of \(\gamma\) on \(\{c>0:\gamma(c)<0\}\) established in Lemma~\ref{lem:gamma-mono}. 
Therefore \(a=c\), so \(u\in S_d(c)\), and consequently
\[
E(u)=\gamma(c).
\]
So \(u\) is a minimizer of \(E(\cdot)\) on \(S_d(c)\).
This completes the proof of the theorem.
\end{proof}

\subsection{Confining potential case: proof of Theorem \ref{thm:Vhar}}

For the case with an external potential, we work in the space
\[
L_V(\mathbb R^d):=\left\{u\in L^2(\mathbb R^d): \int_{\mathbb R^d}V(x)|u(x)|^2\,dx<\infty\right\},
\qquad
X:=H^1(\mathbb R^d)\cap L_V(\mathbb R^d).
\]
For later use, we recall the following compactness result from \cite{2013Mathematical}.

\begin{lemma}\label{lem:compact-X}
Assume that \(V:\mathbb R^d\to[0,\infty)\) satisfies
\[
\lim_{|x|\to\infty}V(x)=\infty.
\]
Then the embedding \(X\hookrightarrow L^q(\mathbb R^d)\) is compact for
\[
q\in[2,\infty)\quad \text{if } d=1,2,
\qquad
q\in[2,6)\quad \text{if } d=3.
\]
\end{lemma}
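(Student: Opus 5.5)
The plan is the standard argument for a compact embedding with a coercive weight. Take a sequence \(\{u_n\}\) that is bounded in \(X\), say \(\|\nabla u_n\|_2^2+\|u_n\|_2^2+\int V|u_n|^2\,dx\le M\). Since \(X\) is a Hilbert space (or simply since \(\{u_n\}\) is bounded in \(H^1\)), we can pass to a subsequence with \(u_n\rightharpoonup u\) weakly in \(H^1(\mathbb R^d)\). Fatou's lemma, applied along an a.e.-convergent further subsequence, gives \(u\in L_V\) with \(\int V|u|^2\le M\). The goal is \(u_n\to u\) strongly in \(L^2\). Strong convergence in \(L^q\) for the other exponents then follows by interpolation: we interpolate between \(L^2\) and a higher exponent \(r\) in which \(\{u_n-u\}\) stays bounded by Sobolev. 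For \(d=1,2\) any \(r<\infty\) will do, and for \(d=3\) we take \(r=6\), which covers \(q\in[2,6)\).

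For the \(L^2\) convergence, fix \(\varepsilon>0\) and use \(V(x)\to\infty\) to choose \(R\) such that \(V(x)\ge 2M/\varepsilon\) for \(|x|\ge R\). On the exterior region,
\[
\int_{|x|\ge R}|u_n-u|^2\,dx\le \frac{\varepsilon}{2M}\int_{|x|\ge R}V|u_n-u|^2\,dx\le \frac{\varepsilon}{2M}\cdot 2\Bigl(\int V|u_n|^2+\int V|u|^2\Bigr)\le 2\varepsilon,
\]
uniformly in \(n\). On the ball \(B_R\), the Rellich--Kondrachov theorem gives that \(H^1(B_R)\hookrightarrow L^2(B_R)\) is compact, so \(u_n\to u\) in \(L^2(B_R)\). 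We then make the choices of \(R\) and the subsequence uniform by a diagonal argument over \(R=k\in\mathbb N\), or more simply note that for the fixed weak limit \(u\) the ball term tends to zero for each \(R\). Combining the two pieces gives \(\limsup_n\|u_n-u\|_2^2\le 2\varepsilon\), and since \(\varepsilon\) was arbitrary, \(u_n\to u\) in \(L^2(\mathbb R^d)\).

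The interpolation step is \(\|u_n-u\|_q\le \|u_n-u\|_2^{\theta}\|u_n-u\|_r^{1-\theta}\) with \(\theta\in(0,1]\). Here \(\|u_n-u\|_r\le C\|u_n-u\|_{H^1}\le C'\) by the Gagliardo--Nirenberg--Sobolev embedding: \(H^1\subset L^r\) for all \(r\in[2,\infty)\) when \(d=1,2\), and \(r\le 6\) when \(d=3\). Since \(q<r\) strictly, we have \(\theta>0\), and the right-hand side tends to zero.

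The main, though mild, obstacle is the uniform tail estimate. It is the only place where confinement is used, and it is exactly what the free-space case lacks. One must be careful that \(u\) itself lies in \(L_V\) (via Fatou) so that the tail bound is uniform in \(n\). The endpoint \(q=6\) in \(d=3\) (and \(q=\infty\)) is excluded precisely because the interpolation would need \(\theta=0\).
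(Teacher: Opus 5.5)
Your proof is correct. The paper does not prove this lemma but quotes it from \cite{2013Mathematical}. Your argument is the standard proof behind that citation: a tail bound $\int_{|x|\ge R}|u_n|^2\,dx\le M/\inf_{|x|\ge R}V$ that is uniform in $n$, Rellich--Kondrachov on balls applied to the fixed weak limit, and interpolation against the $H^1\subset L^r$ bound with $r>q$.

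Two minor remarks:
\begin{enumerate}
\item The Fatou step showing $u\in L_V$ is not needed. The tail term $\|u\|_{L^2(|x|\ge R)}$ tends to $0$ as $R\to\infty$ simply because the fixed limit $u$ lies in $L^2$, and only the $u_n$ tails need to be uniform in $n$.
\item Your closing comment about $q=\infty$ describes a limitation of the interpolation method, not a genuine obstruction. For $d=1$ the $L^\infty$ endpoint is also compact, by $\|f\|_\infty^2\le\|f\|_2\|f'\|_2$.
\end{enumerate}
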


\begin{proof}[Proof of Theorem \ref{thm:Vhar}]

By a similar argument to that in Lemma~\ref{lem:gamma-basic}, we have \(\gamma(c)>-\infty\). Let \(\{u_n\}\subset S_d(c)\) be a minimizing sequence such that
\[
E(u_n)\to \gamma(c).
\]
Since \(V\ge0\), the same argument as in the proof of the free-space case shows that \(\{u_n\}\) is bounded in \(H^1(\mathbb R^d)\) and in \(L_V(\mathbb R^d)\). Therefore, \(\{u_n\}\) is bounded in \(X\). Hence, up to a subsequence,
\[
u_n\rightharpoonup u_0 \qquad \text{weakly in } X.
\]
By Lemma~\ref{lem:compact-X}, we have
\[
u_n\to u_0 \qquad \text{strongly in } L^2(\mathbb R^d)\cap L^4(\mathbb R^d)\cap L^5(\mathbb R^d).
\]
In particular, \(u_0\in S_d(c)\), and therefore
\[
\gamma(c)\le E(u_0).
\]
Moreover, by the weak lower semicontinuity of the \(H^1\)- and \(L_V\)-terms and the strong convergence of the \(L^4\)- and \(L^5\)-terms, we have
\[
E(u_0)\le \liminf_{n\to\infty}E(u_n)=\gamma(c).
\]
Therefore,
\[
E(u_0)=\gamma(c),
\]
and thus \(u_0\) is a minimizer of \(E(\cdot)\) on \(S_d(c)\). Moreover, the minimizer may be chosen nonnegative. Indeed, since \(|u_0|\in S_d(c)\) and \(E(|u_0|)\le E(u_0)=\gamma(c)\), it follows that
\[
E(|u_0|)=\gamma(c).
\]
Hence \(|u_0|\) is also a minimizer.
\end{proof}

\section{Numerical methods for computing ground states} \label{sec:scheme}

In this section, we present numerical methods for computing ground states of the eGP model \eqref{eGPE}--\eqref{mass}. We adopt a gradient-flow iteration with normalization, a standard and effective approach for nonlinear Schr\"odinger-type equations. We first introduce the time semi-discrete formulation, and then combine it with a finite element discretization in space to obtain a fully discrete method. The resulting framework applies in a unified manner to one-, two-, and three-dimensional problems.

\subsection{Gradient-flow iteration}

Since the model contains both cubic and higher-order nonlinear terms, we aim to construct an iteration that remains simple and stable in different parameter regimes. To this end, we introduce the following auxiliary gradient flow in imaginary time with a Lagrange multiplier \cite{2013Mathematical, bao2004computing}
\begin{equation}
    \partial_t \phi(\mathbf{x},t)
    =
    \frac{1}{2}\nabla^2\phi
    -
    V(\mathbf{x})\phi
    -
    \beta |\phi|^2\phi
    -
    \lambda |\phi|^3\phi
    +
    \mu_\phi(t)\phi,
    \quad
    \mathbf{x}\in\mathbb R^d,
    \,
    t>0,
    \label{eq:gfdn_continuous}
\end{equation}
which can be viewed as the constrained \(L^2\)-gradient flow of the energy functional \eqref{energy}.  Here  \(\mu_\phi(t)\) is the Lagrange multiplier associated with the mass constraint \eqref{mass}. The long-time limit of
 \(\phi(\mathbf{x},t)\)  yields the desired ground state.
It is proved that the flow \eqref{eq:gfdn_continuous} preserves the mass and dissipates the energy \cite{2013Mathematical}. 

The flow \eqref{eq:gfdn_continuous} serves as the continuous prototype of our iterative method. To obtain a practical semi-discrete numerical scheme, we discretize the time variable and enforce the mass constraint by a normalization step at each iteration.

Let \(t_n=n\tau\), where \(n\in \mathbb{N}\) and \(\tau>0\) is the time step size, and let
\(\phi^n(\mathbf{x})\) be an approximation to \(\phi(\mathbf{x},t_n)\).
At each iteration, we discretize the Lagrange multiplier in a way consistent with the Euler--Lagrange equation under the mass constraint \(\|\phi\|_2=c\) \cite{liu2021normalized}
\begin{equation}
    \mu^n
    =
    \frac{1}{c^2}\int_{\mathbb R^d}
    \Bigl(
    \frac{1}{2}|\nabla \phi^n|^2
    +
    V(\mathbf{x})|\phi^n|^2
    +
    \beta |\phi^n|^4
    +
    \lambda |\phi^n|^5
    \Bigr)\,d\mathbf{x},
    \label{eq:mu_n}
\end{equation}
and then compute the intermediate state \(\tilde{\phi}^{n+1}\) from
\begin{equation}
    \frac{\tilde{\phi}^{n+1}-\phi^n}{\tau}
    =
    \Bigl(
    \frac{1}{2}\nabla^2
    -
    V(\mathbf{x})
    -
    \lambda |\phi^n|^3
    \Bigr)\tilde{\phi}^{n+1}
    +
    \mu^n \phi^\ast
    -
    \beta |\phi^n|^2 \phi^{**},
    \label{eq:semi_discrete_scheme}
\end{equation}
where
\begin{equation}
    \phi^\ast=
    \begin{cases}
    \tilde{\phi}^{n+1}, & \mu^n<0,\\[4pt]
    \phi^n, & \mu^n\ge0,
    \end{cases}
    \qquad
    \phi^{**}=
    \begin{cases}
    \phi^n, & \beta<0,\\[4pt]
    \tilde{\phi}^{n+1}, & \beta\ge 0.
    \end{cases}
    \label{eq:phi_star}
\end{equation}
Finally, we normalize the intermediate state $\tilde{\phi}^{n+1}$ to obtain
\begin{equation}
    \phi^{n+1}(\mathbf{x})
    =
    c\,\frac{\tilde{\phi}^{n+1}(\mathbf{x})}
    {\|\tilde{\phi}^{n+1}\|_2}.
    \label{eq:semi_discrete_normalization}
\end{equation}
With the above implicit-explicit treatment, each iteration reduces to solving a linear elliptic problem for \(\tilde{\phi}^{n+1}\).

Moreover, the above semi-discrete scheme \eqref{eq:mu_n}--\eqref{eq:semi_discrete_normalization} preserves nonnegativity.

\begin{proposition}[Nonnegativity preservation]
\label{thm:positivity}
Assume that \(\lambda\ge0\) and \(V(\mathbf{x})\ge0\). If
\(\phi^n(\mathbf{x})\ge0\), then the intermediate state
\(\tilde{\phi}^{n+1}\) determined by
\eqref{eq:mu_n}--\eqref{eq:phi_star} satisfies
\[
\tilde{\phi}^{n+1}(\mathbf{x})\ge0.
\]
Consequently, the normalized iterate \(\phi^{n+1}\) defined by
\eqref{eq:semi_discrete_normalization} also satisfies
\[
\phi^{n+1}(\mathbf{x})\ge0.
\]

\end{proposition}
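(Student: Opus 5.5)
The plan is to rewrite \eqref{eq:semi_discrete_scheme} as a linear elliptic problem for \(\tilde\phi^{n+1}\) whose operator has a nonnegative zeroth-order coefficient and whose right-hand side is nonnegative. Then the weak maximum principle (a Stampacchia truncation argument) gives \(\tilde\phi^{n+1}\ge0\).

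First, multiply by \(\tau\) and move every term containing \(\tilde\phi^{n+1}\) to the left. The cases in \eqref{eq:phi_star} lead to
\[
\Bigl(-\tfrac{\tau}{2}\nabla^2 + q(\mathbf{x})\Bigr)\tilde\phi^{n+1} = f(\mathbf{x}),
\]
with
\[
q = 1+\tau V + \tau\lambda|\phi^n|^3 - \tau\mu^n\mathbf{1}_{\{\mu^n<0\}} + \tau\beta|\phi^n|^2\mathbf{1}_{\{\beta\ge0\}},
\]
\[
f = \phi^n + \tau\mu^n\phi^n\mathbf{1}_{\{\mu^n\ge0\}} - \tau\beta|\phi^n|^2\phi^n\mathbf{1}_{\{\beta<0\}}.
\]
The switching rule is designed exactly so that every term in \(q\) is \(\ge0\), since \(V\ge0\), \(\lambda\ge0\), \(-\mu^n>0\) when \(\mu^n<0\), and \(\beta\ge0\) in the implicit branch. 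Hence \(q\ge1>0\). Likewise each term of \(f\) is \(\ge0\) when \(\phi^n\ge0\): \(\mu^n\phi^n\ge0\) when \(\mu^n\ge0\), and \(-\beta|\phi^n|^2\phi^n\ge0\) when \(\beta<0\). So \(f\ge0\).

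Second, I work in the natural energy space (e.g. \(X\), or \(H^1\) with the weight \(q\)). Here \(q\ge1\) makes the bilinear form \(a(w,v)=\frac{\tau}{2}\int\nabla w\cdot\nabla v+\int q\,wv\) coercive, so \(\tilde\phi^{n+1}\) exists and is unique by Lax--Milgram, assuming enough integrability of \(\phi^n\) that \(f\) is in the dual space. I test the weak form with \(v=(\tilde\phi^{n+1})^-=\max(-\tilde\phi^{n+1},0)\in H^1\). Using \(\nabla\tilde\phi^{n+1}\cdot\nabla(\tilde\phi^{n+1})^- = -|\nabla(\tilde\phi^{n+1})^-|^2\) and \(\tilde\phi^{n+1}(\tilde\phi^{n+1})^-=-((\tilde\phi^{n+1})^-)^2\), this gives
\[
-\tfrac{\tau}{2}\|\nabla(\tilde\phi^{n+1})^-\|_2^2-\int q\,((\tilde\phi^{n+1})^-)^2 = \int f\,(\tilde\phi^{n+1})^- \ge 0.
\]
Since \(q\ge1\), this forces \((\tilde\phi^{n+1})^-=0\), i.e. \(\tilde\phi^{n+1}\ge0\).

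Finally, \(\tilde\phi^{n+1}\not\equiv0\), because otherwise \(f\equiv0\), which would force \(\phi^n\equiv0\) and contradict \(\|\phi^n\|_2=c>0\). So the normalization \eqref{eq:semi_discrete_normalization} is well defined and multiplies by a positive constant, which preserves the sign and gives \(\phi^{n+1}\ge0\). The main obstacle is only the bookkeeping of the four sign cases for \(q\) and \(f\), together with justifying that \((\tilde\phi^{n+1})^-\) is an admissible test function in the chosen space. If the paper intends real-valued iterates, I would note that \(|\phi^n|=\phi^n\) there, so no complex-valued subtleties arise.
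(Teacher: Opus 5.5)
Your proof is correct and follows the paper's route. You make the same rewriting into a linear elliptic problem (your equation is the paper's \eqref{eq:unified-elliptic} multiplied by $\tau$) and then apply the maximum principle. You additionally spell out the weak maximum principle via the test function $(\tilde\phi^{n+1})^-$ and check that $\tilde\phi^{n+1}\not\equiv 0$, so the normalization is well defined; the paper leaves both points implicit.
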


\begin{proof}
Define
\[
(\mu^n)^-:=\min\{\mu^n,0\},\qquad (\mu^n)^+:=\max\{\mu^n,0\},
\]
and
\[
\beta^-:=\min\{\beta,0\},\qquad \beta^+:=\max\{\beta,0\}.
\]
Then, by \eqref{eq:semi_discrete_scheme} and \eqref{eq:phi_star}, \(\tilde{\phi}^{n+1}\) satisfies
\begin{equation}\label{eq:unified-elliptic}
\Bigl(
\frac1\tau-\frac12\Delta+V(\mathbf x)+\lambda|\phi^n|^3+\beta^+|\phi^n|^2-(\mu^n)^-
\Bigr)\tilde{\phi}^{n+1}
=
\Bigl(
\frac1\tau+(\mu^n)^+ - \beta^-|\phi^n|^2
\Bigr)\phi^n.
\end{equation}
Hence each iteration reduces to a linear elliptic problem. 
Moreover, the coefficient multiplying \(\tilde{\phi}^{n+1}\) on the left-hand side is nonnegative, and the right-hand side is nonnegative since \(\phi^n\ge0\). Therefore, by the maximum principle, we obtain \(\tilde{\phi}^{n+1}\ge0\), and the normalization step further implies \(\phi^{n+1}\ge0\).

\end{proof}

\subsection{Finite element discretization}

To obtain a fully discrete scheme, we combine the semi-discrete iteration
\eqref{eq:mu_n}--\eqref{eq:semi_discrete_normalization}
with a finite element discretization in space. Such a choice is well suited to the present problem, since droplet-like or other strongly localized states may contain narrow transition layers, making local spatial resolution important. It also provides a convenient framework for adaptive mesh refinement in higher-dimensional computations. 

Let \(\Omega\subset\mathbb R^d\) be a sufficiently large bounded computational domain, and impose homogeneous Dirichlet boundary conditions on \(\partial\Omega\). Let \(\mathcal T_h\) be a conforming triangulation of \(\Omega\), and define the finite element space
\[
V_h
:=
\left\{
v_h\in C(\bar\Omega):
\ v_h|_K\in \mathbb P_1(K)\ \text{for all }K\in\mathcal T_h,
\ \text{and }v_h=0\ \text{on }\partial\Omega
\right\}
\subset H_0^1(\Omega).
\]

Given \(\phi_h^n\in V_h\) with \(\|\phi_h^n\|_2=c\), we first compute the discrete Lagrange multiplier
\begin{equation}
    \mu_{h}^n
    =
    \frac{1}{c^2}\int_{\Omega}
    \Bigl(
    \frac{1}{2}|\nabla \phi_h^n|^2
    +
    V(\mathbf{x})|\phi_h^n|^2
    +
    \beta|\phi_h^n|^4
    +
    \lambda|\phi_h^n|^5
    \Bigr)\,d\mathbf{x},
    \label{eq:mu_n_fem}
\end{equation}
and then seek the intermediate state \(\tilde{\phi}_h^{n+1}\in V_h\) such that
\begin{equation}
\begin{aligned}
    \Bigl(
    \frac{\tilde{\phi}_h^{n+1}-\phi_h^n}{\tau},\,v_h
    \Bigr)
    &=
    -\frac{1}{2}\bigl(\nabla \tilde{\phi}_h^{n+1},\nabla v_h\bigr)
    -\bigl(V(\mathbf{x})\tilde{\phi}_h^{n+1},v_h\bigr)
    -\bigl(\lambda |\phi_h^n|^3\tilde{\phi}_h^{n+1},v_h\bigr) \\
    &\quad
    +\mu_h^n(\phi_h^\ast,v_h)
    -\bigl(\beta |\phi_h^n|^2\phi_h^{**},v_h\bigr),
    \qquad
    \forall\,v_h\in V_h,
\end{aligned}
\label{eq:fem_scheme}
\end{equation}
where
\begin{equation}
    \phi_h^\ast=
    \begin{cases}
    \tilde{\phi}_h^{n+1}, & \mu_h^n<0,\\[4pt]
    \phi_h^n, & \mu_h^n\ge0,
    \end{cases}
    \qquad
    \phi_h^{**}=
    \begin{cases}
    \phi_h^n, & \beta<0,\\[4pt]
    \tilde{\phi}_h^{n+1}, & \beta\ge 0.
    \end{cases}
    \label{eq:phi_star_fem}
\end{equation}
Finally, we normalize the intermediate state by
\begin{equation}
    \phi_h^{n+1}
    =
    c\,\frac{\tilde{\phi}_h^{n+1}}
    {\|\tilde{\phi}_h^{n+1}\|_2}.
    \label{eq:fem_normalization}
\end{equation}

At each iteration, the above scheme requires solving a linear variational problem for \(\tilde{\phi}_h^{n+1}\). The formulation applies in a unified way to one-, two-, and three-dimensional problems. In practical computations, the problem on \(\mathbb R^d\) is truncated to a sufficiently large bounded domain \(\Omega\), so that the truncation effect is negligible. For higher-dimensional computations, adaptive meshes are employed to better resolve localized structures while keeping the computational cost under control. In this work, the finite element computations are implemented via FreeFEM \cite{hecht2005freefem++}.

\begin{remark}
The above numerical framework also covers the standard Gross--Pitaevskii equation as a special case by setting \(\lambda=0\).
\end{remark}

\begin{remark}
For problems that are one-dimensional, or can be reduced to one-dimensional form under symmetry assumptions, standard finite difference discretizations are already sufficient and often more convenient in practice. The finite element formulation presented above is adopted mainly as a unified framework for one-, two-, and three-dimensional computations, especially for higher-dimensional adaptive computations. For symmetric problems, the corresponding finite difference discretization is given in \ref{app:sym}.
\end{remark}

 
\section{Numerical simulations}
\label{sec:experiment}

In this section, we present numerical results for the ground states of the eGP model \eqref{eGPE}--\eqref{mass}. We first study how the ground-state profiles depend on the model parameters in symmetric settings. We then classify the different regimes in the parameter plane and use a simple heuristic approximation to explain the flat-top behavior observed in the droplet regime. 
Finally, we present several two- and three-dimensional computations to illustrate more general spatial structures.

\subsection{Parameter dependence in symmetric settings}

In the symmetric cases considered here, the two- and three-dimensional ground-state problems can be reduced by symmetry to a one-dimensional formulation. This significantly lowers the computational cost and makes it possible to use fine spatial grids, which is convenient for a systematic study of parameter effects. The corresponding one-dimensional scheme is described in \ref{app:sym}.

 In this subsection, we examine the influence of the mass \(c\), the cubic interaction coefficient \(\beta\), and the LHY coefficient \(\lambda\) in symmetric settings. We first study how the ground-state profile changes with \(c\) for fixed \(\beta\) and \(\lambda\). We then fix the normalized mass \(c=1\) and investigate the effects of varying \(\beta\) and \(\lambda\) in both the free-space case and the harmonic-potential case.

\begin{example}[Effect of the mass \(c\)]
In this example, we study the effect of the mass \(c\) on three-dimensional spherically symmetric ground states with fixed parameters \(\beta=-10\) and \(\lambda=0.1\).

We consider two external potentials, namely the free-space case \(V(r)\equiv 0\) and the harmonic-potential case \(V(r)=10^5 r^2\). The reduced radial problem is solved on \([0,1]\) by the finite difference scheme \eqref{eq:radial_scheme} with \(M=2048\) and \(\tau=10^{-2}\), starting from a normalized Gaussian initial guess. The iteration is terminated when
\(
\max_{0\le j\le M-1}\bigl|\phi_{j+\frac12}^{\,n+1}-\phi_{j+\frac12}^{\,n}\bigr|<10^{-10}.
\)
Figure~\ref{fig:c} shows the computed ground-state profiles for \(c=1,4,8,12\). 

In both the free-space and harmonic-potential cases, increasing \(c\) mainly broadens the central region, while the peak value changes only slightly. In the free-space case, the profile gradually develops a wider flat-top structure, indicating a clear saturation effect characteristic of droplet-like states. In the harmonic-potential case, however, the external confinement makes the profile slightly more localized, and the central plateau is less pronounced than in free space.

\end{example}

\begin{figure}[htp!]
    \centering
    \includegraphics[width=0.99\linewidth]{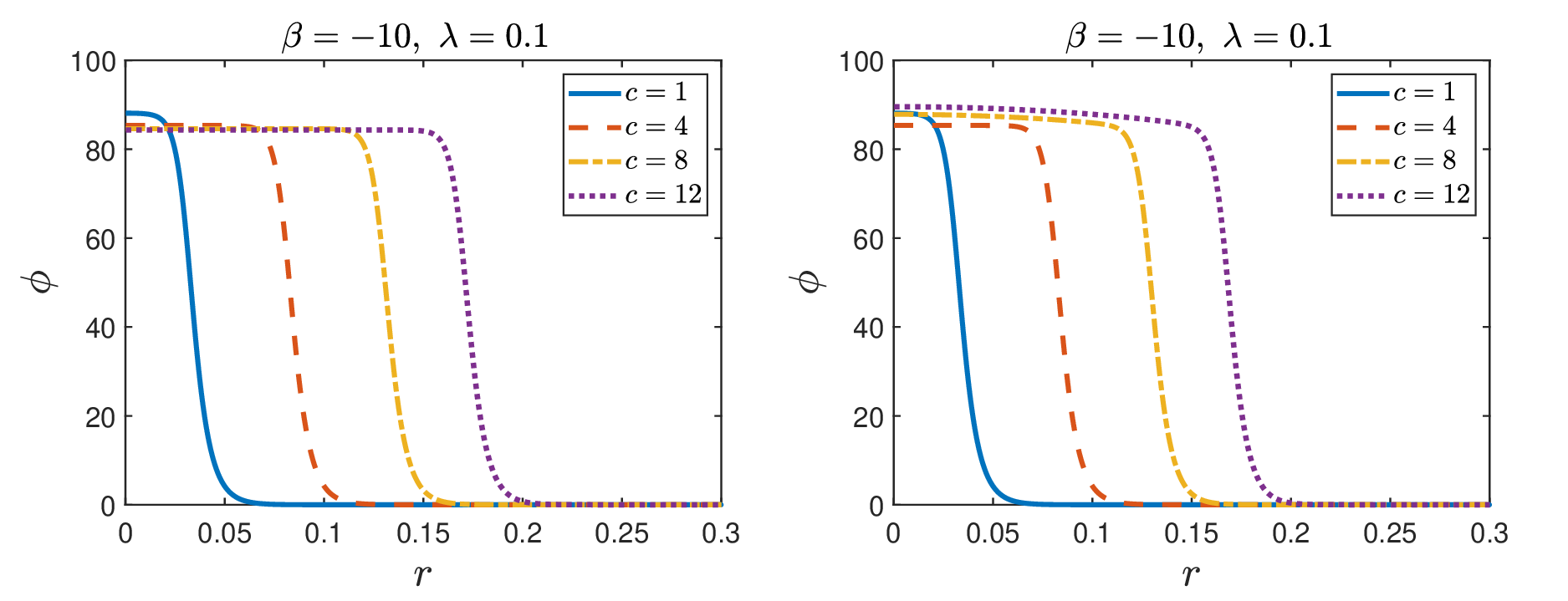}
  \caption{Radial 3D ground-state profiles for different masses \(c=1,4,8,12\) with \(\beta=-10\) and \(\lambda=0.1\). The left panel shows the free-space case, and the right panel shows the harmonic-potential case \(V(r)=10^5r^2\).}
     \label{fig:c}
\end{figure}

In all subsequent numerical experiments, we fix the mass at \(c=1\), so that the effects of \(\beta\) and \(\lambda\) can be compared more directly.

\begin{remark}
A problem with general mass \(c\) can be reduced to the normalized case \(c=1\) by a simple scaling. Under this scaling, the parameters become
\[
\tilde{\beta}=\beta c^2,\qquad
\tilde{\lambda}=\lambda c^3.
\]
Therefore, fixing \(c=1\) does not reduce the generality of the numerical experiments.
\end{remark}

\begin{example}[Effects of \(\beta\) and \(\lambda\)]\label{example:52}
In this example, we study the effects of \(\beta\) and \(\lambda\) on three-dimensional spherically symmetric ground states with normalized mass \(c=1\).

We consider the same two external potentials as in the previous example, namely the free-space case \(V(r)\equiv 0\) and the harmonic-potential case \(V(r)=10^5 r^2\). The reduced radial problem is solved with the same numerical setting as in the previous example, including the initial guess and the stopping criterion. We first fix \(\lambda=4\) and vary \(\beta\), and then fix \(\beta=-10\) and vary \(\lambda\). The resulting profiles are displayed in Figure~\ref{fig:lambda} and Figure~\ref{fig:beta}, respectively.

Figure~\ref{fig:lambda} shows that, for fixed \(\lambda=4\), the ground-state profile becomes increasingly concentrated as \(\beta\) decreases from \(-20\) to \(-50\). More precisely, the central value increases, while the support of the profile shrinks. Figure~\ref{fig:beta} shows that, for fixed \(\beta=-10\), decreasing \(\lambda\) produces a similar effect. In this case, the profile also becomes higher near the center and more localized in space. These observations are consistent with the balance between the attractive cubic term and the repulsive higher-order term. A stronger mean-field attraction or a weaker higher-order repulsion both enhance concentration of the ground state. Moreover, in the harmonic-potential case, the external confinement further compresses the profile, so that the peak is higher and the support is slightly smaller than in the free-space case.

\end{example}

\begin{figure}[htp!]
    \centering
    \includegraphics[width=0.99\linewidth]{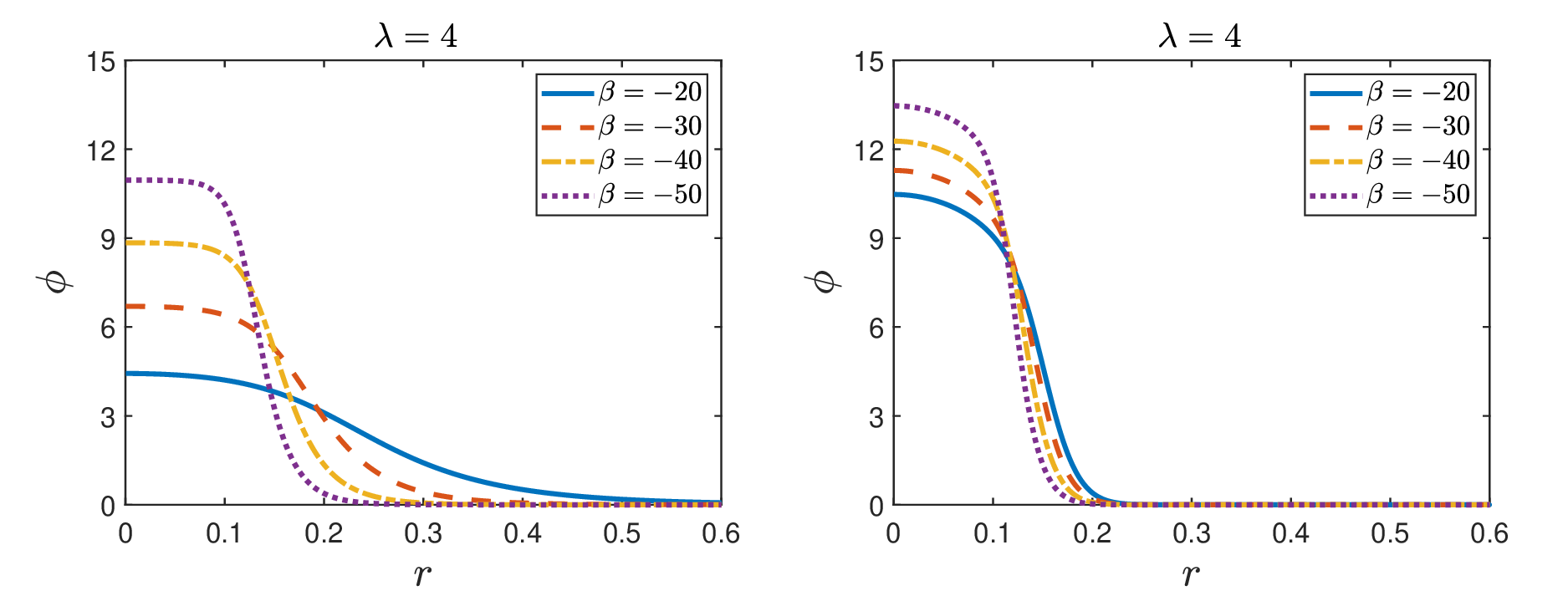}
     \caption{Radial 3D ground-state profiles for \(\beta=-20,-30,-40,-50\) with \(\lambda=4\). The left panel shows the free-space case, and the right panel shows the harmonic-potential case \(V(r)=10^5r^2\). }
         \label{fig:lambda}
\end{figure}

\begin{figure}[htp!]
    \centering
    \includegraphics[width=0.99\linewidth]{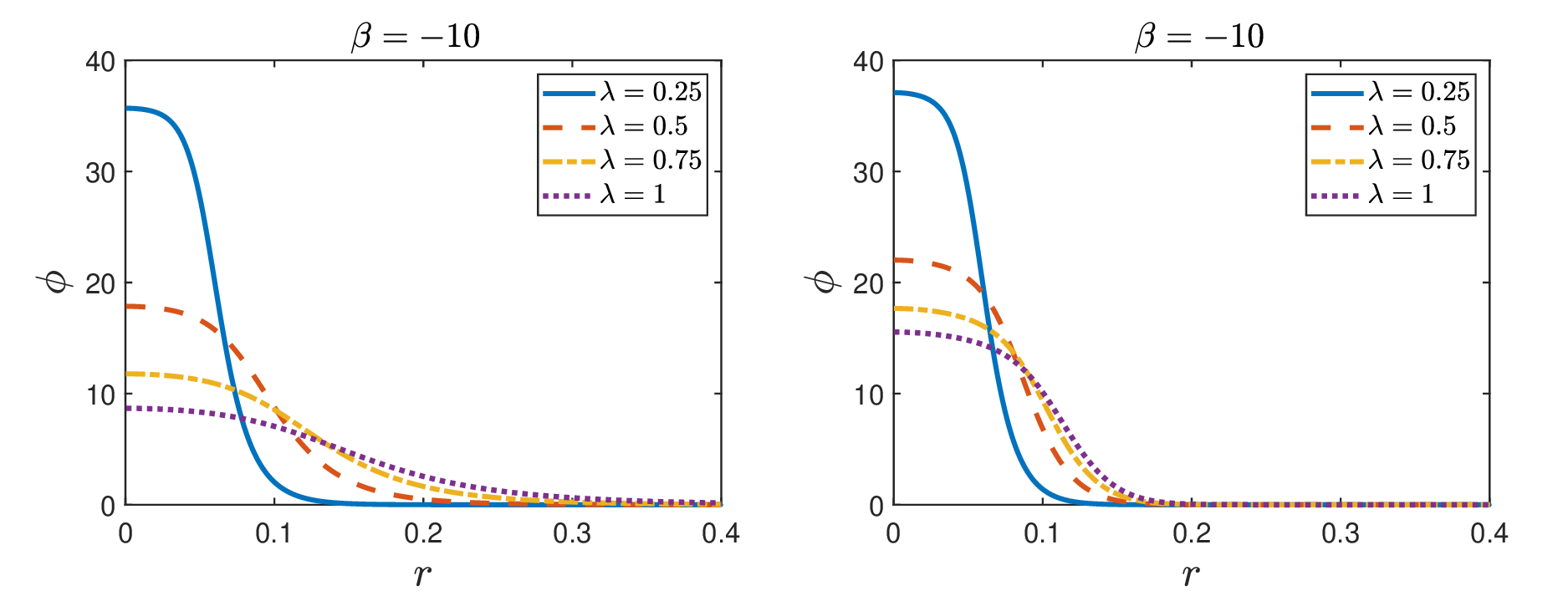}
   \caption{Radial 3D ground-state profiles for \(\lambda=0.25,0.5,0.75,1\) with fixed \(\beta=-10\). The left panel shows the free-space case, and the right panel shows the harmonic-potential case \(V(r)=10^5r^2\). }
       \label{fig:beta}
\end{figure}

\subsection{Phase diagram and heuristic flat-top approximation}

The symmetric computations above show two qualitatively different types of profiles in the parameter plane. Some profiles decay smoothly from the center, while others contain a more pronounced central high-density region. To distinguish these behaviors quantitatively, we introduce the following numerical indicator based on the density
\[
\rho(\mathbf{x}) = |\phi(\mathbf{x})|^2,
\qquad \mathbf{x}\in \mathbb{R}^d.
\]
For a parameter \(\theta\in(0,1)\) chosen close to \(1\), define
\[
\Omega_\theta
=
\left\{
\mathbf{x}\in\mathbb{R}^d:\,
\rho(\mathbf{x})\ge \theta \|\rho\|_{L^\infty}\right\}.
\]
We then set
\[
\eta_\theta
=
\frac{\int_{\Omega_\theta} \rho(\mathbf{x})\,d\mathbf{x}}
{\int_{\mathbb{R}^d} \rho(\mathbf{x})\,d\mathbf{x}},
\]
which measures the fraction of the total mass contained in the region where the density remains close to its maximum.

Profiles with a more evident central high-density core typically yield larger values of \(\eta_\theta\), whereas smoothly decaying profiles give smaller values. Thus, \(\eta_\theta\) serves as a convenient numerical diagnostic for distinguishing different regimes in the parameter plane. 

\begin{example}[Phase diagram in the parameter plane]
In this example, we consider the three-dimensional free-space case \(V(\mathbf x)\equiv 0\) with normalized mass \(c=1\), and use the indicator \(\eta_\theta\) introduced above to classify the profile regimes in the \((\beta,\lambda)\)-plane. For each pair \((\beta,\lambda)\), we compute the corresponding ground state with the same numerical setting as in Example \ref{example:52}, evaluate \(\eta_\theta\) with \(\theta=0.99\), and plot the resulting heatmap.

Figure~\ref{fig:heatmap} shows the distribution of \(\eta_\theta\) over the parameter range
\[
-25<\beta<-1,
\qquad
1/500<\lambda<1.
\]
The phase diagram reveals three qualitatively different regimes. The lower gray region corresponds to parameter values for which no ground state exists in the free-space case. This is consistent with Theorem~\ref{thm:V0}, which predicts a threshold between the existence and nonexistence regimes. Above this region lies a soliton-like regime, where the ground states remain localized but do not exhibit a clear flat-top structure. For stronger attraction or weaker higher-order repulsion, the profiles enter a droplet-like regime, characterized by a more evident high-density core and flat-top behavior. The two marked parameter sets \(A\) and \(B\) illustrate representative droplet-like and soliton-like profiles, respectively.
\end{example}

\begin{figure}[htp!]
    \centering
    \includegraphics[width=0.99\linewidth]{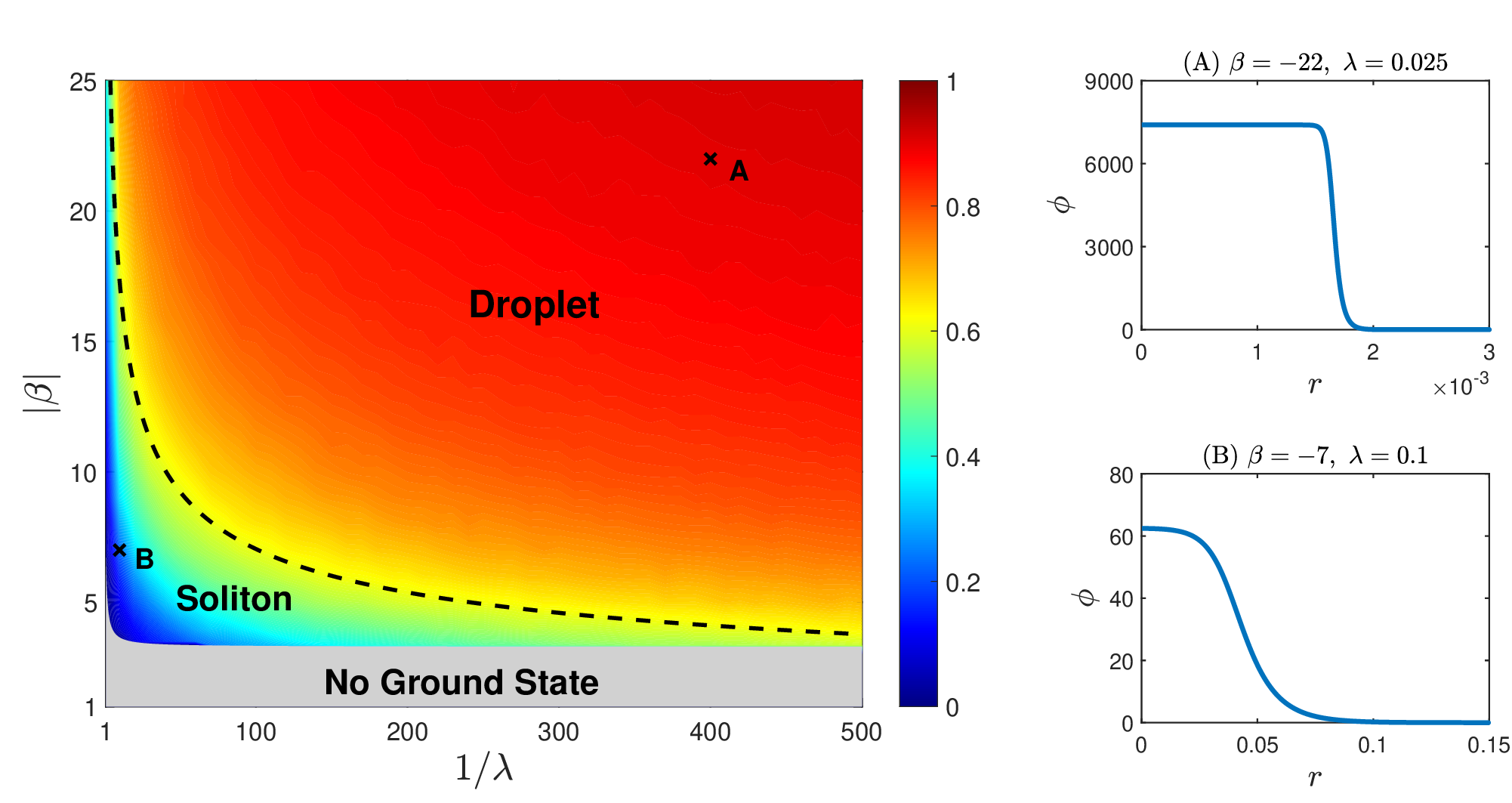}
    \caption{Heatmap of \(\eta_\theta\) in the \((1/\lambda,|\beta|)\)-plane for \(-25<\beta<-1\) and \(1/500<\lambda<1\), showing the droplet-like, soliton-like, and no-ground-state regimes. The dashed curve denotes the contour line corresponding to the threshold \(\eta_\theta=0.62\) used to separate the droplet-like and soliton-like regions. Two representative radial ground-state profiles corresponding to the marked points \(A\) and \(B\) are displayed on the right.}
    \label{fig:heatmap}
\end{figure}

The phase diagram suggests that, in part of the parameter plane, the ground states develop a pronounced central high-density region. This motivates the following heuristic approximation.

In the limiting regime \(\beta\to -\infty\) or \(\lambda\to 0\), the ground state is expected to be well approximated by a compactly supported profile that is nearly constant on its support. Accordingly, we consider the ansatz
\begin{equation}\label{example:54_phi}
\phi(\mathbf{x})\approx a\,\chi_D(\mathbf{x}),
\end{equation}
where \(a>0\) denotes the constant value on the support region \(D\subset\mathbb R^d\), and \(|D|\) denotes the volume of \(D\).
Since \(\|\phi\|_2=c\), the mass constraint gives
\begin{equation}\label{example:54_cons}
a|D|^{1/2}=c.
\end{equation}
Substituting \eqref{example:54_phi} and \eqref{example:54_cons} into \eqref{energy}, and neglecting the kinetic energy, we obtain the approximate energy
\begin{equation}\label{example:54_energy}
E_{\mathrm{app}}(a)
=
\frac{\beta}{2}ca^2+\frac{2\lambda}{5}ca^3.
\end{equation}
Minimizing \(E_{\mathrm{app}}(a)\) with respect to \(a\) yields
\[
a= -\frac{5\beta}{6\lambda},
\qquad
E_{\mathrm{app}}^{\min}
=
E_{\mathrm{app}}\!\left(-\frac{5\beta}{6\lambda}\right)
=
\frac{\beta c}{6}\left(\frac{5\beta}{6\lambda}\right)^2.
\]
We then compare these predictions with the ground states. Here \(\phi_g\) denotes the computed ground state obtained on a sufficiently fine mesh. Since the peak is attained at the origin, \(\phi_g(0)\) represents its peak value. The relative errors are defined by
\begin{equation}\label{def:rel_err}
e_a=\frac{|\phi_g(0)-a|}{|\phi_g(0)|},
\qquad
e_E=\frac{|E(\phi_g)-E_{\mathrm{app}}^{\min}|}{|E(\phi_g)|}.
\end{equation}

\begin{example}[Verification of the flat-top approximation]
Under the free-space condition \(V(\mathbf x)\equiv 0\), we fix \(c=1\) and vary \(\beta\) and \(\lambda\) separately to compare the computed ground states with the above heuristic estimates in the limiting regime. 
\end{example}

\begin{table}[htp!]
\centering
\begin{tabular}{|c|ccc|ccc|}
\hline
\(\beta\) & \(\phi_g(0)\) & \(a\) & \(e_a\) & \(E(\phi_g)\) & \(E_{\mathrm{app}}^{\min}\) & \(e_E\) \\
\hline
-10  & 8.81E1 & 8.33E1 & 5.42E-2 & -7.76E3 & -1.16E4 & 4.92E-1 \\
-50  & 4.23E2 & 4.17E2 & 1.58E-2 & -1.34E6 & -1.45E6 & 8.47E-2 \\
-100 & 8.41E2 & 8.33E2 & 9.04E-3 & -1.11E7 & -1.16E7 & 4.42E-2 \\
-250 & 2.09E3 & 2.08E3 & 4.26E-3 & -1.77E8 & -1.81E8 & 1.97E-2 \\
-500 & 4.18E3 & 4.17E3 & 2.41E-3 & -1.43E9 & -1.45E9 & 1.12E-2\\
\hline
\end{tabular}
\caption{Computed and estimated peak values and energies, together with the relative errors \(e_a\) and \(e_E\) defined in \eqref{def:rel_err}, for different values of \(\beta\) under \(V(\mathbf x)\equiv 0\), \(c=1\) and \(\lambda=0.1\).}
\label{tab:beta_estimate}
\end{table}

\begin{table}[htp!]
\centering
\begin{tabular}{|c|ccc|ccc|}
\hline
\(\lambda\) & \(\phi_g(0)\) & \(a\) & \(e_a\) & \(E(\phi_g)\) & \(E_{\mathrm{app}}^{\min}\) & \(e_E\) \\
\hline
0.1   & 8.81E1 & 8.33E1 & 5.42E-2 & -7.76E3 & -1.16E4 & 4.92E-1 \\
0.02  & 4.31E2 & 4.17E2 & 3.38E-2 & -2.36E5 & -2.89E5 & 2.24E-1 \\
0.01  & 8.57E2 & 8.33E2 & 2.73E-2 & -9.92E5 & -1.16E6 & 1.66E-1 \\
0.004 & 2.13E3 & 2.08E3 & 2.04E-2 & -6.49E6 & -7.23E6 & 1.15E-1 \\
0.002 & 4.24E3 & 4.17E3 & 1.64E-2 & -2.66E7 & -2.89E7 & 8.84E-2\\
\hline
\end{tabular}
\caption{Computed and estimated peak values and energies, together with the relative errors \(e_a\) and \(e_E\) defined in \eqref{def:rel_err}, for different values of \(\lambda\) under \(V(\mathbf x)\equiv 0\), \(c=1\) and \(\beta=-10\).}
\label{tab:lambda_estimate}
\end{table}

From Tables~\ref{tab:beta_estimate} and~\ref{tab:lambda_estimate}, we observe that the relative errors in both the peak value and the energy decrease as \(|\beta|\) increases or \(\lambda\) decreases. This indicates that the flat-top approximation becomes more accurate in the limiting regime of strong attraction or weak higher-order repulsion, and thus provides a leading-order description of the droplet regime.

\subsection{Two- and three-dimensional computations}

We next present several two- and three-dimensional examples based on the finite element discretization described in Section~\ref{sec:scheme}. Unlike the symmetric cases discussed above, these computations are carried out directly in higher dimensions and allow for more general spatial configurations. They further demonstrate the effectiveness of the proposed method for localized ground states with sharp transition layers.

\begin{example}[Two-dimensional free-space case] \label{example:2D}
In this example, we consider the two-dimensional free-space case with \(c=20\), \(\beta=-10\), \(\lambda=0.1\), and \(V(x,y)\equiv 0\).

The computational domain is \(D=[-1,1]\times[-1,1]\). The computation is performed with \(\tau=10^{-2}\) on an initial coarse mesh with 50 divisions on each boundary. Starting from a normalized Gaussian initial guess, the computation terminates when
\(
\|\phi^{n+1}_h-\phi^n_h\|_2<10^{-6}.
\)

Figure~\ref{fig:2d_none} shows the computed ground-state density together with the corresponding adaptive mesh. The mesh adaptation is carried out in FreeFEM according to the local variation of the numerical solution \cite{hecht2005freefem++}.  As a result, the mesh is refined near the narrow transition layer, where the solution varies rapidly, and remains relatively coarse in the inner high-density region and the outer low-density region. This demonstrates that the adaptive finite element method is effective for resolving localized structures in two dimensions.
\end{example}

\begin{figure}[htp!]
    \centering
    \includegraphics[width=0.99\linewidth]{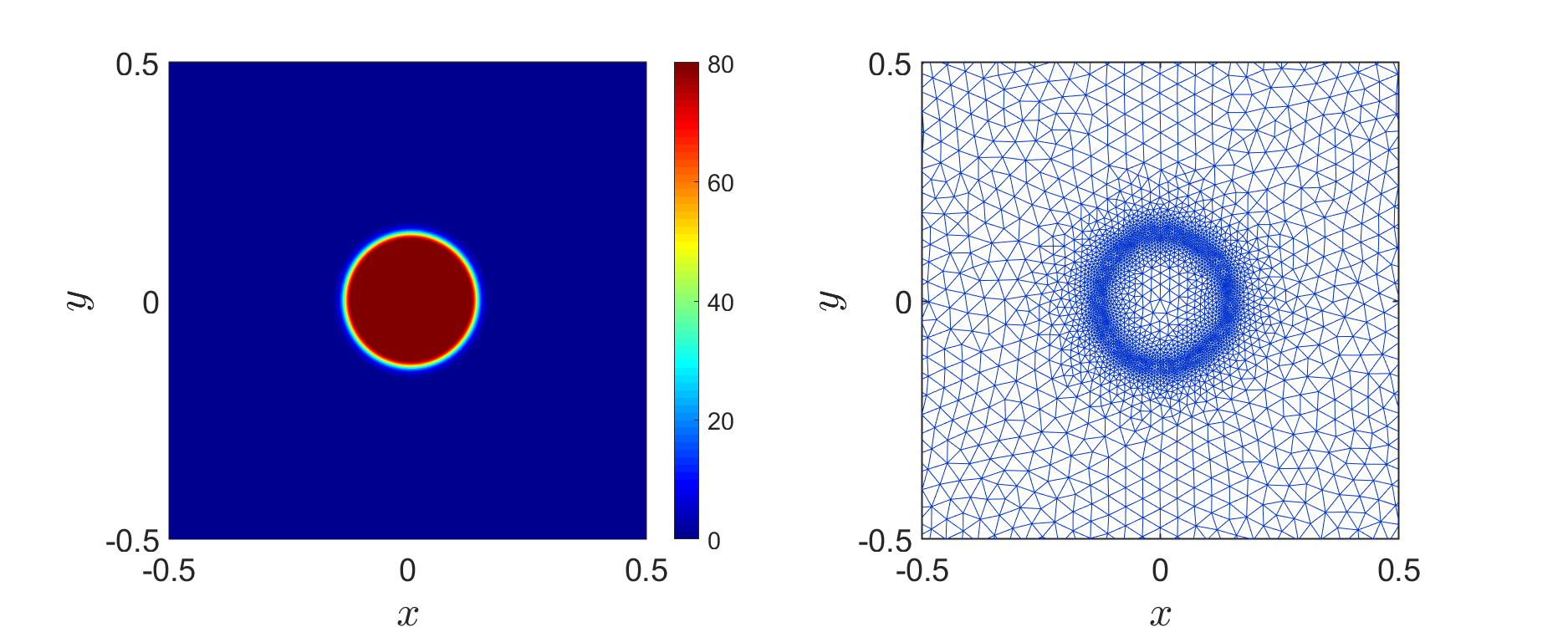}
     \caption{Two-dimensional computation without an external potential. The left panel shows the ground-state density, while the right panel shows the corresponding adaptive mesh.}
    \label{fig:2d_none}
\end{figure}

\begin{example}[Two-dimensional optical lattice potential]
In this example, we consider the two-dimensional case with the optical lattice potential
\[
V(x,y)=V_0(\cos(5\pi x)+\cos(5\pi y)),
\]
with \(V_0=10^3\) and \(V_0=3\times 10^3\).
The computational domain, time step, initial guess, and all other parameters are kept the same as in Example~\ref{example:2D}.

Figure~\ref{fig:2d_lattic} shows the computed ground states for these two values of \(V_0\). When \(V_0\) is relatively small, the solution exhibits several connected peaks, and the density in the regions between them remains non-negligible. As \(V_0\) increases, the peaks become more separated and the density between neighboring peaks decreases significantly, indicating stronger localization induced by the optical lattice potential.

\begin{figure}[htp!]
    \centering
    \includegraphics[width=0.99\linewidth]{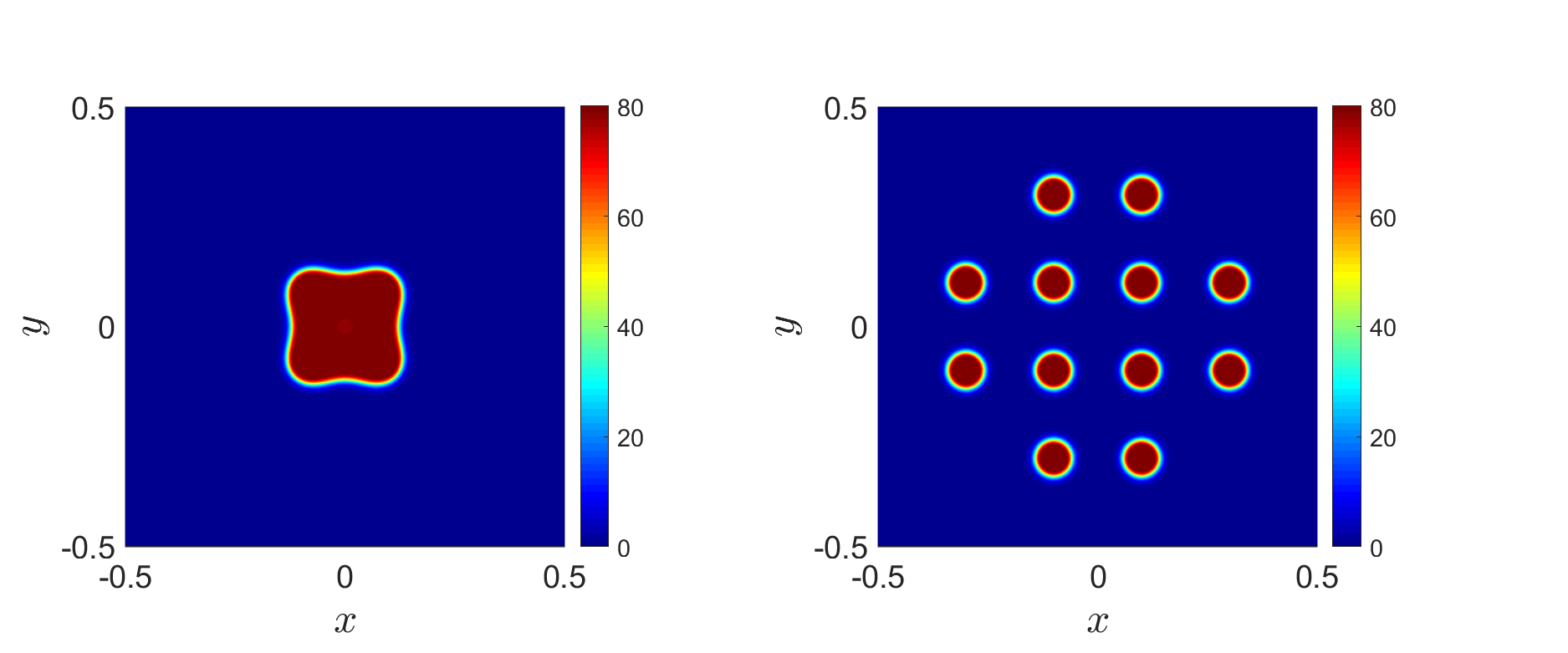}
  \caption{Ground-state densities under the optical lattice potential \(V(x,y)=V_0(\cos(5\pi x)+\cos(5\pi y))\) with \(V_0=10^3\) and \(V_0=3\times 10^3\). }
    \label{fig:2d_lattic}
\end{figure}
\end{example}

\begin{example}[Three-dimensional cases]
In this example, we consider the three-dimensional case with \(c=20\), \(\beta=-10\), and \(\lambda=0.1\), under two different external potentials, namely the free-space case \(V(x,y,z)\equiv 0\) and the anisotropic harmonic-potential case
\(
V(x,y,z)=\frac12(x^2+y^2+10^4z^2).
\)

The computational domain is \(D=[-1,1]\times[-1,1]\times[-1,1]\). The computation is performed with \(\tau=10^{-2}\), starting from a normalized Gaussian initial guess, and terminates when
\(
\|\phi_h^{n+1}-\phi_h^n\|_2<10^{-6}.
\)

Figure~\ref{fig:3d} shows three-dimensional visualizations of the computed ground states, using a surrounding surface together with three planar cross-sections. In the free-space case, the profile remains nearly isotropic and is very close to a spherical configuration. In the anisotropic harmonic-potential case, the solution is strongly compressed in the \(z\)-direction, while remaining more extended in the \(x\)-\(y\) plane. In both cases, the density remains localized and exhibits a thin transition layer near the boundary of the high-density core. These results show that the proposed method can effectively capture localized ground states in three dimensions and their deformation under anisotropic confinement.

\end{example}

\begin{figure}[htp!]
    \centering
    \begin{subfigure}[b]{0.45\textwidth}
        \centering
        \includegraphics[width=\linewidth]{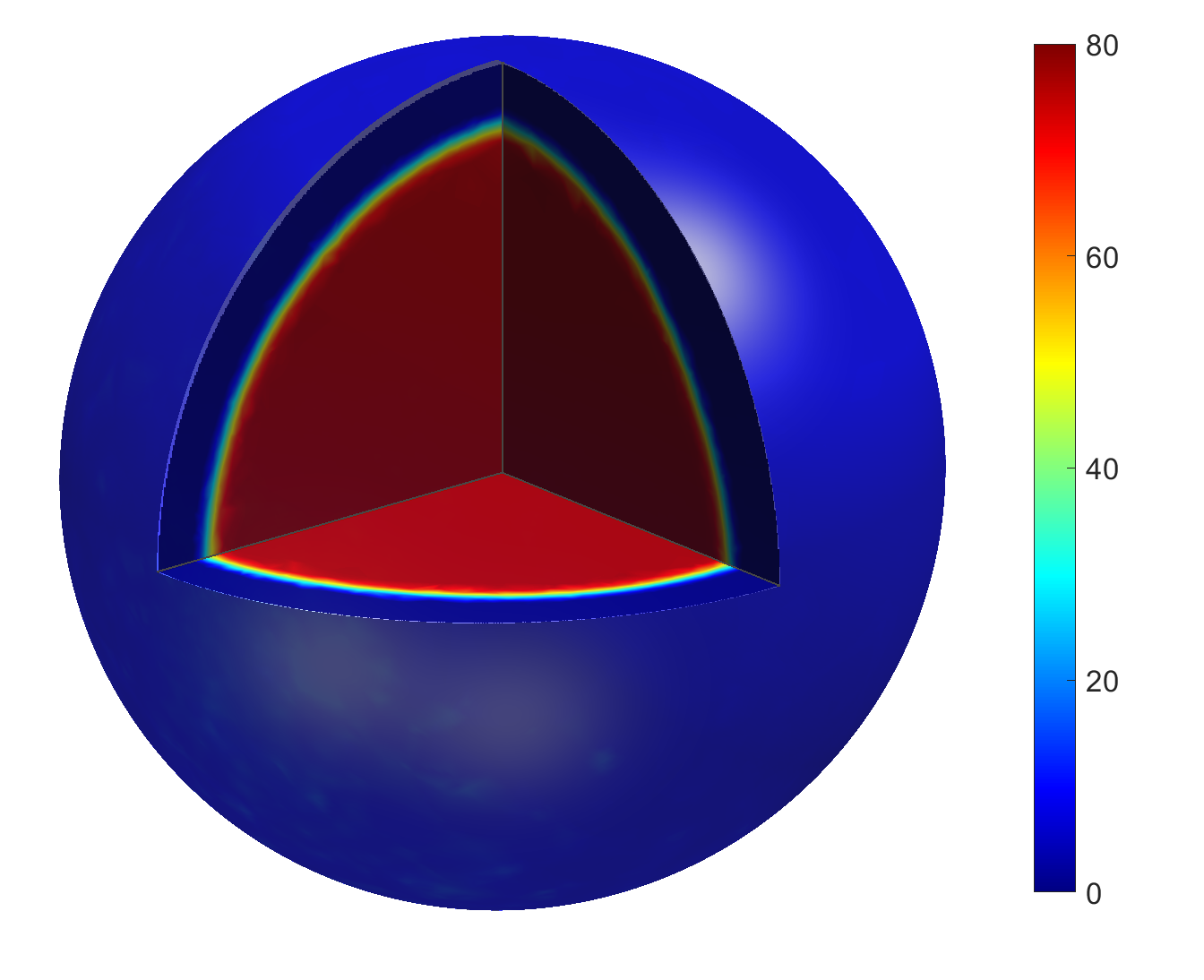}
        \label{subfig:3d1}
    \end{subfigure}
    \hspace{1em}
    \begin{subfigure}[b]{0.45\textwidth}
        \centering
        \includegraphics[width=\linewidth]{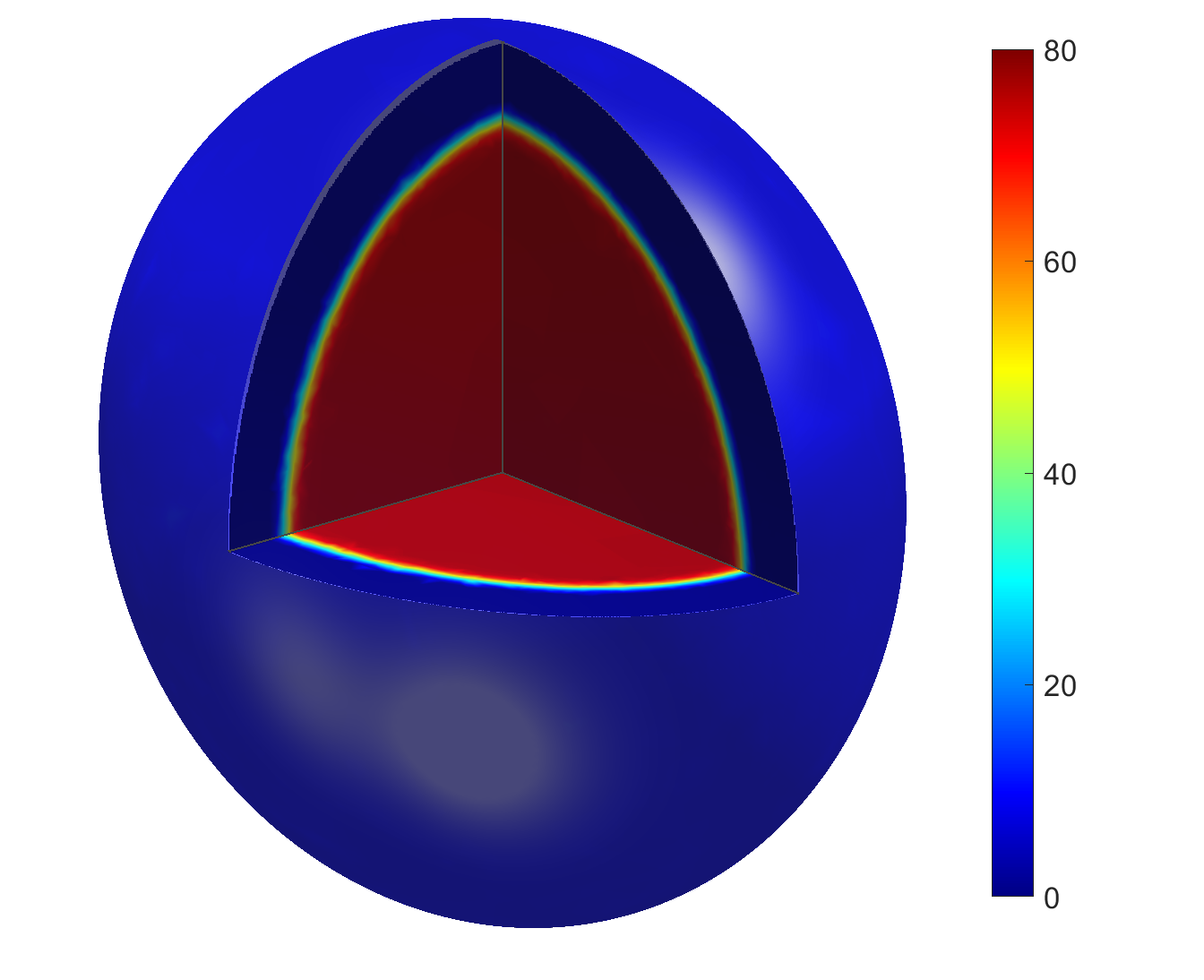}
        \label{subfig:3d2}
    \end{subfigure}
    \caption{Three-dimensional ground-state densities for \(c=20\), \(\beta=-10\), and \(\lambda=0.1\), shown by a surrounding surface together with planar cross-sections. The left panel corresponds to the free-space case \(V(x,y,z)\equiv 0\), and the right panel corresponds to the anisotropic harmonic-potential case \(V(x,y,z)=\frac12(x^2+y^2+10^4z^2)\).}
    \label{fig:3d}
\end{figure}


\section{Conclusion}
\label{sec:conclusion}

In this work, we investigated the ground states of the extended Gross--Pitaevskii (eGP) equation with the Lee--Huang--Yang (LHY) correction from both theoretical and numerical perspectives. Starting from the three-dimensional model, we derived reduced one- and two-dimensional equations through nondimensionalization and dimensional reduction, and wrote the resulting problems in a unified form. For this unified eGP model, we established existence and nonexistence results for ground states in different spatial dimensions, both in the free-space case and in the presence of a confining external potential.

For the numerical computation of ground states, we proposed a normalized gradient flow method with a Lagrange multiplier and combined it with a finite element discretization in space. For symmetric settings, the problem was further reduced to a one-dimensional formulation, which allowed a systematic study of parameter effects. The numerical results showed how the mass, interaction strength, and LHY coefficient influence the ground-state profiles, and revealed different regimes in the parameter plane, including no-ground-state, soliton-like, and droplet-like regions. In the droplet regime, we further introduced a simple flat-top approximation and verified its accuracy in the limiting parameter range. We also presented two- and three-dimensional computations to illustrate more general localized structures and their deformation under external potentials.

Overall, the present work provides both analytical results and effective numerical methods for the study of ground states in eGP models with the LHY correction.

\appendix
\renewcommand{\thesection}{Appendix \Alph{section}}
\renewcommand{\theequation}{\Alph{section}.\arabic{equation}}

\titleformat{\section}
  {\normalfont\large\bfseries}
  {\thesection}{1em}{}

\section{Finite difference discretization in symmetric settings}\label{app:sym}

When the external potential \(V(\mathbf{x})\) is even in one dimension, radially symmetric in two dimensions, or spherically symmetric in three dimensions, the corresponding symmetric solution depends only on the variable \(r\in[0,\infty)\), where \(r=|x|\) for \(d=1\) and \(r=|\mathbf{x}|\) for \(d=2,3\). In this situation, the eGP equation \eqref{eGPE} becomes
\begin{equation}
i\partial_t\phi(r,t)
=
-\frac{1}{2r^{d-1}}\frac{\partial}{\partial r}
\Bigl(r^{d-1}\frac{\partial \phi}{\partial r}\Bigr)
+
\left(V(r)+\beta|\phi|^2+\lambda|\phi|^3\right)\phi,
\qquad r\in(0,\infty).
\label{eq:radial_egpe}
\end{equation}
This equation is supplemented with the boundary conditions
\begin{equation}
\partial_r\phi(0,t)=0,
\qquad
\phi(r,t)\to 0
\quad \text{as } r\to\infty.
\label{eq:radial_bc}
\end{equation}

To compute the ground state, we truncate the semi-infinite interval to \([0,R]\) with sufficiently large \(R\) and discretize it by a midpoint grid with mesh size \(\Delta r=R/M\). Let
\[
r_j=j\Delta r,\qquad
r_{j+\frac12}=\Bigl(j+\frac12\Bigr)\Delta r,
\qquad j=0,1,\dots,M.
\]
Denote by \(\phi_{j+\frac12}^n\) the approximation to \(\phi(r_{j+\frac12},t_n)\). The corresponding discrete operator is defined by
\begin{equation}
\delta_{r,d}^2 \phi_{j+\frac12}
=
\frac{1}{(\Delta r)^2\, r_{j+\frac12}^{d-1}}
\left(
r_{j+1}^{d-1}\phi_{j+\frac32}
-
\bigl(r_{j+1}^{d-1}+r_j^{d-1}\bigr)\phi_{j+\frac12}
+
r_j^{d-1}\phi_{j-\frac12}
\right),
\qquad j=0,\dots,M-1.
\label{eq:radial_laplacian}
\end{equation}

Replacing the Laplacian in \eqref{eq:semi_discrete_scheme} by \(\delta_{r,d}^2\), we obtain the finite difference scheme
\begin{equation}
\frac{\tilde \phi_{j+\frac12}^{n+1}-\phi_{j+\frac12}^n}{\tau}
=
\Bigl(
\frac12 \delta_{r,d}^2
-
V(r_{j+\frac12})
-
\lambda |\phi_{j+\frac12}^n|^3
\Bigr)\tilde \phi_{j+\frac12}^{n+1} +
\mu^n  \phi_{j+\frac12}^{*}
-
\beta |\phi_{j+\frac12}^n|^2\phi_{j+\frac12}^{**},
\, j=0,\dots,M-1,
\label{eq:radial_scheme}
\end{equation}
where \(\mu^n\),  \(\phi_{j+\frac12}^{*}\) and \(\phi_{j+\frac12}^{**}\) are defined in a similar way as in \eqref{eq:mu_n_fem} and \eqref{eq:phi_star_fem}.  The boundary conditions are discretized as
\begin{equation}
\tilde\phi_{-\frac12}^{n+1}=\tilde\phi_{\frac12}^{n+1},
\qquad
\tilde \phi_{M+\frac12}^{n+1}=0.
\label{eq:radial_scheme_bc}
\end{equation}
The intermediate solution is then normalized by
\begin{equation}
\phi_{j+\frac12}^{n+1}
=
c\,\frac{\tilde \phi_{j+\frac12}^{n+1}}{\|\tilde \phi^{n+1}\|_{2}},
\label{eq:radial_normalization}
\end{equation}
where the discrete \(L^2\)-norm is defined as 
\begin{equation}
\|\tilde \phi^{n+1}\|_{2}
=
\biggl(
\omega(d)\Delta r
\sum_{j=0}^{M-1}
\left|\tilde \phi_{j+\frac12}^{n+1}\right|^2
\bigl(r_{j+\frac12}\bigr)^{d-1}
\biggr)^{1/2},
\qquad
\omega(1)=2,\ \omega(2)=2\pi,\ \omega(3)=4\pi.
\label{eq:radial_norm}
\end{equation}

\bigskip
\noindent{\large\bf Acknowledgements}  \ 
X. Ruan acknowledges support  from the National Natural Science Foundation of China under grant  12201436.
W. Huang acknowledges support  from the National Natural Science Foundation of China under grant  12001034.

\providecommand{\bysame}{\leavevmode\hbox to3em{\hrulefill}\thinspace}
\providecommand{\MR}{\relax\ifhmode\unskip\space\fi MR }
\providecommand{\MRhref}[2]{%
  \href{http://www.ams.org/mathscinet-getitem?mr=#1}{#2}
}
\providecommand{\href}[2]{#2}

\end{document}